\documentclass[aps,nofootinbib,twocolumn]{revtex4}
\usepackage{amsfonts,amssymb,amsmath}            
\usepackage[]{graphics,graphicx,epsfig}            
\usepackage{amsthm}
\def\identity{\leavevmode\hbox{\small1\kern-3.8pt\normalsize1}}

\newtheorem{lemma}{Lemma}

\newcommand{\ket}[1]{\left | #1 \right\rangle}
\newcommand{\bra}[1]{\left \langle #1 \right |}
\newcommand{\half}{\mbox{$\textstyle \frac{1}{2}$}}

\newcommand{\Tr}{\text{Tr}}

\newcommand{\proj}[1]{\ket{#1}\bra{#1}}
\renewcommand{\epsilon}{\varepsilon}
\bibliographystyle{apsrev}
%
%

\usepackage[matrix,frame,arrow]{xy}
\usepackage{amsmath}
\newcommand{\qw}[1][-1]{\ar @{-} [0,#1]}
\newcommand{\qwx}[1][-1]{\ar @{-} [#1,0]}


\newcommand{\gate}[1]{*{\xy *+<.6em>{#1};p\save+LU;+RU **\dir{-}\restore\save+RU;+RD **\dir{-}\restore\save+RD;+LD **\dir{-}\restore\POS+LD;+LU **\dir{-}\endxy} \qw}
\newcommand{\meter}{\gate{\xy *!<0em,1.1em>h\cir<1.1em>{ur_dr},!U-<0em,.4em>;p+<.5em,.9em> **h\dir{-} \POS <-.6em,.4em> *{},<.6em,-.4em> *{} \endxy}}





\newcommand{\control}{*!<0em,.025em>-=-{\bullet}}

\newcommand{\ctrl}[1]{\control \qwx[#1] \qw}
\newcommand{\lstick}[1]{*!R!<.5em,0em>=<0em>{#1}}


\newcommand{\Qcircuit}[1][0em]{\xymatrix @*[o] @*=<#1>}





\begin{document}

\title{Arboreal Bound Entanglement}
\date{\today}

\author{Alastair \surname{Kay}}
\affiliation{Centre for Quantum Computation,
             DAMTP,
             Centre for Mathematical Sciences,
             University of Cambridge,
             Wilberforce Road,
             Cambridge CB3 0WA, UK}
\affiliation{Centre for Quantum Technologies, National University of Singapore, 
			3 Science Drive 2, Singapore 117543}
\begin{abstract}
In this paper, we discuss the entanglement properties of graph-diagonal states, with particular emphasis on calculating the threshold for the transition between the presence and absence of entanglement (i.e.~the separability point). Special consideration is made of the thermal states of trees, including the linear cluster state. We characterise the type of entanglement present, and describe the optimal entanglement witnesses and their implementation on a quantum computer, up to an additive approximation. In the case of general graphs, we invoke a relation with the partition function of the classical Ising model, thereby intimating a connection to computational complexity theoretic tasks. Finally, we show that the entanglement is robust to some classes of local perturbations.
\end{abstract}

\maketitle

\section{Introduction}

Multipartite entanglement is still a phenomenon that is poorly understood and categorised. For some types of entangled state, such as $N$-qubit GHZ states, it requires very little noise (loss of a single qubit) to entirely destroy the entanglement, whereas others are much more robust, such as the multipartite states based on error correcting codes. One is therefore prompted to ask about the types of multi-particle entanglement that arise in nature, and query how persistent entanglement is within the thermal states of local Hamiltonians. The ability of entanglement to persist at high temperatures will be crucial to future experiments and could have a direct bearing on the construction of quantum memories \cite{fernando_prl}.

We defer the analysis of the persistence of entanglement in general Hamiltonians to future studies. In this paper, we instead concentrate on a specific class, graph Hamiltonians, with the intent of gaining insights for these future analyses. Graph states are particularly interesting because they frequently arise within the study of quantum information. Many states such as Bell states, GHZ states and CSS error correcting codes are all equivalent, under the action of local unitary rotations, to graph states. It is already known that {\em all} thermal graph states can be distilled up to a finite temperature \cite{Kay:2006b} and that for some subset, which includes many interesting examples such as the graphs corresponding to square lattices of arbitrary dimension, this temperature is tight i.e.~above that temperature, entanglement can't be distilled \cite{Kay:2006b,Kay:2007}. On the other hand, for some models, it has already been shown that there is still entanglement present in these models. This was first demonstrated for GHZ Hamiltonians \cite{Kay:2008}, but has since been shown to exist in simpler models, such as for square lattices \cite{leandro}.

Papers such as \cite{leandro} simply give an existence proof for bound entanglement, which comes as little surprise since it would be quite remarkable if the distillability temperature (upper bounded by the temperature at which there exists a bipartition of the qubits such that the state is positive under the partial transpose operation) coincided with the full separability temperature (lower bounded by the temperature at which there exists a bipartition of the qubits such that the state is negative under the partial transpose operation), and do not tackle the question the maximum temperature to which the entanglement persists. This is part of what we address here, although we also discuss other issues such as those arising in \cite{vlatko:1} about when one can give a fully separable decomposition of the thermal state, as well as giving the optimal entanglement witnesses (in Sec.~\ref{sec:entwitness}). In Sec.~\ref{sec:1D}, we will analyse the linear graph (also known as the 1D cluster state) in particular detail, and prove that the state becomes fully separable as soon as all possible bipartitions are positive under the partial transpose operation (PPT), as is the case for the thermal states of all tree graphs. In conjunction with the previous distillability result, we thus know the entire entanglement structure of these states. We will demonstrate what features of the proof are particular to thermal states by contrasting with the global and local depolarising noise cases in the appendices, which also serves to resolve some outstanding questions from \cite{Kay:2007}. The techniques that appear here are a generalisation of those presented in \cite{Kay:2010}, in which the geometry of the star graph was found to be particularly beneficial for proving many of the same properties for GHZ-diagonal states. For treatment of that special case, we refer the reader back to that paper; we will not dwell on it here.

The treatment of entanglement witnesses in this paper is important in its own right. There have been numerous studies of the entanglement in graph states \cite{christian,maarten,ott1,ott2,ott3,ott4}. Many of these detect entanglement on an {\em ad hoc} basis, i.e.~an operator is found which detects some entanglement. Our strategy rather takes the opposite approach; we start from determining the absolute limits of the presence of entanglement, describe the witnesses that work at these limits, and then show how to approximate these on a quantum computer. These witnesses can therefore be expected to be substantially stronger than existing witnesses in certain regimes (particularly in the case of thermal noise), although in some instances these other witnesses can also be shown to be optimal \cite{lewenstein,ott1}. Other recent studies \cite{wunderlich1,wunderlich2} provide an alternative perspective -- given a set of measurement data, they solve a semi-definite program in order to determine the most and least entangled states compatible with those measurement results with respect to a specific entanglement measure.

\subsection{Multipartite Entanglement Structure}

Beyond a simple statement of whether a particular graph-diagonal state is entangled, we wish to be able to categorise this entanglement. A first natural question is how useful is this entanglement? The most useful type would be distillable entanglement, and the distillability regime has already been investigated \cite{Kay:2006b,Kay:2007}. Beyond that, \cite{dur:99} suggested a hierarchy of separability criteria: for a state of $N$ qubits, with respect to how many independent groupings of the qubits, $k$, is the state separable? It is said to be $k$-separable. For instance, if a state can be written as
$$
\rho=\sum_ip_i\rho^{1,2\ldots N_A}_i\otimes \rho^{N_A+1,\ldots N_B}_i\otimes \rho^{N_B+1,\ldots N_C}_i
$$
for some grouping of the qubits, the state is $3$-separable. If a state is not $k$-separable with respect to some partition, we say that there is some $k$-partite bound entanglement present (the smaller $k$, the stronger one might consider the entanglement to be). The particular case which will arise frequently is when the state is not distillable, and yet not every bipartition is separable, so there exists some bipartite bound entanglement. We emphasise that this is not the same as the genuine multipartite entanglement that is often discussed, particularly in the pure state case. For instance, in that case, a 3-qubit state $(\ket{00}+\ket{11})\ket{0}/\sqrt{2}$ is bipartite entangled, whereas $(\ket{000}+\ket{111})/\sqrt{2}$ is tripartite entangled. Since this class of entanglement will never arise in the present treatment, we are not unduly worried about possible confusion.

\subsection{Graph State Basics}

Consider a graph $G$ which is composed of edges $E$ and $N$ vertices $V$. With each vertex we can associate a qubit. The stabilizer operators are defined as
$$
K_n=X_n\prod_{\{n,m\}\in E}Z_m \quad\forall n\in V
$$
and the Hamiltonian for the graph $G$ is
$$
H=-\half\Delta\sum_{n\in V}K_n,
$$
where $X_n$ is the Pauli $X$ matrix applied to the qubit on vertex $n$. We use $Z_x$ to denote the application of $Z$ rotations to all qubits for which the $N$-bit string $x$ is 1, i.e.~if we denote the $n^{th}$ bit of $x\in\{0,1\}^N$ as $x_n$, we have
$$
Z_x=\prod_{n\in V}Z_n^{x_n}.
$$
$K_x$ is similarly defined in terms of the elementary $K_n$.
The eigenstates of the Hamiltonian are $\ket{\psi_x}=Z_x\ket{\psi}$, where $\ket{\psi}$ denotes the ground state of the Hamiltonian (known as `the' graph state for the graph $G$). These states are all simultaneous eigenstates of the stabilizers
$$
K_n\ket{\psi_x}=(-1)^{x_n}\ket{\psi_x}
$$
and they have energies $-N\Delta/2+\Delta w_x$, where $w_x$ is the Hamming weight of $x$. A general graph-diagonal state $\rho$ (meaning that $\bra{\psi_x}\rho\ket{\psi_y}=0$ if $x\neq y$) can be written as
\begin{equation}
\rho=\frac{1}{2^N}\sum_{y\in\{0,1\}^N}s_yK_y
\label{eqn:gen_form}
\end{equation}
for real coefficients $s_y$ with $s_0=1$ and
$$
\sum_{y\in\{0,1\}^N}(-1)^{x\cdot y}s_y\geq 0\qquad\forall x\in\{0,1\}^N.
$$
Any state can be made graph state diagonal via stochastic local operations while maintaining the values of $s_y$ \cite{dur:03,aschauer:04}. The thermal state is written more succinctly since $s_y=s^{w_y}$ where $s=\tanh(\beta\Delta/2)$ and $\beta$ is the standard inverse temperature,
$$
\rho=\frac{e^{-\beta H}}{\Tr(e^{-\beta H})}=\frac{1}{2^N}\prod_{n\in V}(\identity+\tanh(\beta\Delta/2)K_n).
$$
This state is the ground state $\ket{\psi}$ with $Z$ rotations applied independently at each site with probability
$$
p=\frac{e^{-\beta\Delta}}{1+e^{-\beta\Delta}}.
$$
This interpretation shows that there is a single critical temperature for the persistence of entanglement -- it is not the case that as the temperature is varied the system starts entangled, becomes separable, and then becomes entangled again because, through local operations (probabilistic application of $Z$ rotations), we can convert any thermal graph state into one of a higher temperature, and this process must only reduce the entanglement. It also makes this model of interest in a variety of experimental scenarios, since the thermal state corresponds to local dephasing noise acting on $\ket{\psi}$.

Alternatively, $\ket{\psi}$ can be defined constructively. It is formed by preparing each qubit in the state $\ket{+}$ and applying a controlled-phase gate between every nearest-neighbour pair (i.e.~every two vertices forming an edge). Perhaps the most vital observation, which allowed the previous proof of distillability \cite{Kay:2006b,Kay:2007}, is that the controlled-phase gate commutes with the $Z$ noise. This allows us to map between different graph states without propagating the noise in a bad way.

One can apply local unitary operations to graph states and this can map them into other interesting states, such as those related to error correcting codes \cite{schlingemann,Schlingemann:01}. Certain local unitaries, however, can map graph states into different graph states \cite{hein:04}. Such operations will keep graph diagonal states as graph diagonal states (but for a different graph). However, the permutation of the diagonal elements means that these local rotations do not map thermal states into thermal states.

It will often be the case that we are interested in two-colourable graphs. In the literature on graph theory, the term `bipartite' is more commonly used. However, we avoid this terminology in order to avoid confusion since bipartite will arise in other contexts. For the graph-theoretic terminology not defined here, we refer the reader to \cite{graph_text}.

\section{Bipartite Bound Entanglement}

In \cite{Kay:2006b,Kay:2007}, we described a class of graphs whose thermal states can be optimally distilled. Every graph state can be distilled up to the temperature
$$
T_{\text{distillable}}=\frac{\Delta}{k_B\ln(\sqrt{2}+1)}.
$$
Clearly, it is necessary for distillability that there should be distillable entanglement between every possible bipartition, i.e.~every bipartition should be non-positive with respect to the partial transpose operation (NPT).

Our program now consists of deriving when there exists at least one NPT bipartition of a thermal graph state, and comparing it with the temperature for distillability. Let us emphasise that there could still be entanglement above this bipartite bound, which would correspond either to bipartite entanglement not detected by the NPT condition, or to genuine multipartite bound entanglement, but we need to establish this bound first. We start from Eqn.~(\ref{eqn:gen_form}) and introduce a bipartition $z\in\{0,1\}^N$ i.e.~all the vertices with $z_n=0$ are on one side of the partition and those with $z_n=1$ are on the other side. We will take the partial transpose on the $z_n=1$ side. Recall that under the partial transpose, the Pauli operators alter by $Z_n\mapsto Z_n$, $X_n\mapsto X_n$ but $Y_n\mapsto(-1)^{z_n}Y_n$. Thus,
$$
\rho^{PT}=\frac{1}{2^N}\sum_{y\in\{0,1\}^N}s_yK_y(-1)^{\sum_{\{n,m\}\in E}y_ny_m(z_n\oplus z_m)}.
$$
Observe that products of stabilizers remain as products of stabilizers and, as a result, the eigenvectors of $\rho^{PT}$ are just $\ket{\psi_x}$, with eigenvalues
$$
\frac{1}{2^N}\sum_{y\in\{0,1\}^N}(-1)^{x\cdot y}s_y(-1)^{\sum_{\{n,m\}\in E}y_ny_m(z_n\oplus z_m)}.
$$
For thermal states, we are interested in finding the smallest positive value of $s$ such that the smallest eigenvalue is zero. In the remainder of this section, we will establish the optimal choices of both $x$ and $z$.

\begin{lemma}
Let $G$ be a graph with an induced subgraph, $H$. Entanglement persists in the thermal graph state of graph $G$ up to at least the level it persists for $H$. \label{lemma:1}
\end{lemma}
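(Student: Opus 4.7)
The plan is to exhibit a local operations and classical communication (LOCC) protocol that deterministically maps $\rho_G(T)$, the thermal state of $G$, onto $\rho_H(T)$, the thermal state of $H$, at the \emph{same} temperature. Since LOCC cannot create entanglement across any bipartition, any bipartition of $V(H)$ across which $\rho_H(T)$ is entangled extends to a bipartition of $V(G)$ across which $\rho_G(T)$ must be entangled too: one simply assigns each vertex of $M=V(G)\setminus V(H)$ to either side arbitrarily.

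The protocol is to measure every qubit in $M$ in the $Z$ basis and then apply classical-conditioned $Z$ corrections on the appropriate vertices of $V(H)$. Its correctness rests on two ingredients. First, the standard graph-state identity: a $Z$-basis measurement on vertex $m$ of $\ket{\psi_G}$ leaves the remaining qubits in $\ket{\psi_{G\setminus m}}$ up to a $Z$ on each neighbour of $m$ determined by the outcome. Iterating this for every $m\in M$ produces the graph whose vertex set is $V(H)$ and whose edges are precisely the edges of $G$ both of whose endpoints lie in $V(H)$; because $H$ is an \emph{induced} subgraph this is exactly $H$. The induced hypothesis is essential — surplus edges cannot be removed by $Z$ measurements alone. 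Second, the protocol commutes cleanly with the local $Z$-dephasing that defines the thermal state. Writing $\rho_G(T)=\sum_x p_x Z_x\ket{\psi_G}\bra{\psi_G}Z_x$ with $p=e^{-\beta\Delta}/(1+e^{-\beta\Delta})$, the $Z$ errors on $M$ commute past the $Z$ projectors and merely flip classical outcomes (and hence the correction pattern), while those on $V(H)$ commute past everything and emerge unchanged on the output. After applying the computed corrections one is left with $\ket{\psi_H}$ dressed by independent $Z$ errors of rate $p$ at each site of $V(H)$, which is precisely $\rho_H(T)$.

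To close the argument, suppose $\rho_H(T)$ is entangled across some bipartition $(A,B)$ of $V(H)$ and extend $(A,B)$ to $V(G)$ arbitrarily. Every step of the protocol is either a local measurement or a single-qubit $Z$ applied to one vertex of $V(H)$ conditional on shared classical data, hence LOCC across the extended bipartition; if $\rho_G(T)$ were separable across it, its LOCC image $\rho_H(T)$ would have to be separable as well, a contradiction. The main obstacle is not conceptual but rather bookkeeping: one must track Pauli propagation when several adjacent vertices of $M$ are measured in succession, so that the final correction on each vertex of $V(H)$ is computed as the correct parity of measurement outcomes on its $M$-neighbours (equivalently, think of all the $Z$ measurements as commuting and performed simultaneously, with the corrections derived once from the joint outcome string).
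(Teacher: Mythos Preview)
Your proof is correct and follows essentially the same approach as the paper's own proof: measure the vertices of $M=V(G)\setminus V(H)$ in the $Z$ basis, apply compensating $Z$ rotations on neighbours according to the outcomes, use the fact that the local $Z$ dephasing defining the thermal state commutes with this procedure to conclude $\rho_G(T)\mapsto\rho_H(T)$, and then invoke that LOCC cannot create entanglement across any bipartition. Your write-up is more explicit about the LOCC structure, the role of the ``induced'' hypothesis, and the bookkeeping of corrections when several adjacent vertices are measured, but the underlying argument is the same.
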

\begin{proof}
$Z$-measurements (which serve to remove vertices from a graph state) commute with the local $Z$ noise present in the thermal state, so by applying these measurements (and applying compensating $Z$ rotations on the neighbours of any spins we measure in the $\ket{1}$ state), we can cut out any spins we want from $G$. In particular, we can remove all the vertices in $G$ but not in $H$. This leaves behind $H$, from which we conclude that if entanglement persists in any bipartition of $H$, it must also persist in an identical partition of $G$ (where it is irrelevant on which side of the partition we place the additional spins).
\end{proof}
We learn from Lemma \ref{lemma:1} that bound entanglement is more persistent in large systems than in small ones!

Let us neglect the factor of $2^N$, since it is of no consequence. The eigenvalues are thus denoted by
$$
f_{x,z}^N(s)=\sum_{y\in\{0,1\}^N}(-1)^{x\cdot y}s^{w_y}(-1)^{\sum_{\{n,m\}\in E}y_ny_m(z_n\oplus z_m)}
$$
for the thermal state. By neglecting the subscript $x$, we mean the case of $x=\{11\ldots 1\}$, and by neglecting $z$ we mean (for two-colourable graphs) the two-colouring partition.
\begin{equation}
f^N(s)=\sum_{y\in\{0,1\}^N}(-1)^{w_y}s^{w_y}(-1)^{\sum_{\{n,m\}\in E}y_ny_m}. \label{eqn:thermal}
\end{equation}
We will be particularly loose with the superscript $N$. As an $N$, it denotes the number of vertices in a graph (and hence the number of bits of $x$ and $z$), which is useful for recursion relations. However, we might instead use $G$ to denote a particular graph and, often, it can be neglected entirely as it is clear from the context.

The notation of neglecting the $x$ and $z$ subscripts will arise because we can prove the best choices:
\begin{lemma}
For thermal graph states, the smallest positive value of $s$ for which $\min_{x,z}f_{x,z}(s)=0$ is achieved with $x=\{11\ldots 1\}$. \label{lemma:3}
\end{lemma}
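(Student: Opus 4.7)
The plan is to show that, for each fixed $z$, the polynomial $f_{(1,\ldots,1),z}(s)$ is pointwise the smallest of the family $\{f_{x,z}(s):x\in\{0,1\}^N\}$ on the interval $[0,s^*]$, where $s^*$ is the desired threshold for $G$. Once that is established, one picks any $(x^*,z^*)$ attaining the threshold (so $f_{x^*,z^*}(s^*)=0$), obtains $f_{(1,\ldots,1),z^*}(s^*)\leq 0$, and concludes by continuity and strict positivity on $[0,s^*)$ that $f_{(1,\ldots,1),z^*}(s^*)=0$ — so the threshold is in fact realised by the all-ones choice of $x$.

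The central technical step is a single-bit-flip identity. Writing $\epsilon_y(z)=(-1)^{\sum_{\{n,m\}\in E}y_ny_m(z_n\oplus z_m)}$ and expanding directly,
\begin{equation*}
f_{x,z}(s)-f_{x\oplus e_n,z}(s)=2\sum_{y:\,y_n=1}(-1)^{x\cdot y}\,\epsilon_y(z)\,s^{w_y}.
\end{equation*}
Parameterising each such $y$ by its restriction $y'$ to $V\setminus\{n\}$ (with $y_n=1$) and splitting the edge-sum in $\epsilon_y(z)$ into edges incident to $n$ versus edges entirely in $V\setminus\{n\}$, one finds $\epsilon_y(z)=\epsilon_{y'}(z|_{V\setminus\{n\}})\cdot(-1)^{\sum_{k\sim n}y'_k(z_n\oplus z_k)}$. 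The extra phase is absorbed into a redefined string $\tilde x$ equal to $x|_{V\setminus\{n\}}$ with the bits at neighbours $k\sim n$ flipped exactly when $z_k\neq z_n$. The difference then collapses to
\begin{equation*}
f_{x,z}(s)-f_{x\oplus e_n,z}(s)=2s\,(-1)^{x_n}\,f^{V\setminus\{n\}}_{\tilde x,\,z|_{V\setminus\{n\}}}(s),
\end{equation*}
whose right-hand side is literally (up to the prefactor) an eigenvalue of the partial transpose of the thermal state on the induced subgraph $G\setminus\{n\}$.

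This is exactly what allows Lemma~\ref{lemma:1} to do the closing work. That lemma gives $s^*_G\leq s^*_{G\setminus\{n\}}$, so for every $s\in[0,s^*_G]$ the partial transpose of the induced-subgraph thermal state is positive semidefinite and in particular $f^{V\setminus\{n\}}_{\tilde x,z|_{V\setminus\{n\}}}(s)\geq 0$. When $x_n=0$ this yields $f_{x,z}(s)\geq f_{x\oplus e_n,z}(s)$ on $[0,s^*_G]$: flipping a $0$ of $x$ to a $1$ only decreases the eigenvalue. Iterating over all zero bits of $x$ takes any $x$ to the all-ones string, giving $f_{(1,\ldots,1),z}(s)\leq f_{x,z}(s)$ for every $x$ and every $s\in[0,s^*_G]$, and the first paragraph's argument closes the lemma.

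The main thing to check carefully is the identity in the second paragraph: one has to verify that the sign $(-1)^{\sum_{k\sim n}y'_k(z_n\oplus z_k)}$ picked up from the edges incident to $n$ is exactly the right bit-flip on the $x'$ coordinates to realise another $f$-value on the induced subgraph, rather than some auxiliary polynomial to which Lemma~\ref{lemma:1} could not be applied. Once this identity is nailed down, the monotonicity under bit-flips, the descent to $x=(1,\ldots,1)$, and the threshold argument are all essentially formalities.
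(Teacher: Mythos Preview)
Your proof is correct and follows essentially the same approach as the paper: the key bit-flip identity you derive is exactly the paper's Eqn.~(\ref{eqn:iterate}) (your $\tilde x$ is what the paper writes as $x\oplus k$ with $k_m=E_{m,n}(z_m\oplus z_n)$), and the use of Lemma~\ref{lemma:1} to guarantee nonnegativity of the induced-subgraph eigenvalue on $[0,s^*_G]$ is identical. Your closing continuity argument is spelled out a bit more carefully than in the paper, but the substance of the two proofs is the same.
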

\begin{proof}
Consider a bipartition $z$. Select a spin $n$, using $x$ to index all bit strings on the $N-1$ bits not $n$, and use $k$ as an $N-1$ bit string which is 0 except in the positions which have an edge to spin $n$ which crosses the bipartition, $k_m=E_{m,n}(z_m\oplus z_n)$. One can show that
\begin{equation}
f^G_{x0}(s)-f^G_{x1}(s)=2sf^{G\setminus n}_{x\oplus k}(s).	\label{eqn:iterate}
\end{equation}
From Lemma \ref{lemma:1}, we know that the graph $G$ is at least as entangled about any bipartition as it is when qubit $n$ is removed. Therefore, in the region when all the $f^G_{x,z}(s)$ are positive (i.e.~for values of $s>0$ but smaller than the critical value), all the $f^{G\setminus n}_{x,z}(s)$ are also positive. Therefore, in this region, the smallest of all the $f^G_{x,z}(s)$ is for $x=\{11\ldots 1\}$, i.e.~this is the one that will become 0 first.
\end{proof}

\begin{lemma}
The optimal bipartition for the thermal state of a two-colourable graph is a partition into its two colour classes. \label{lemma:4}
\end{lemma}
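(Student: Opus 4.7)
The plan is to reduce the problem to a single-edge deletion inequality and then iterate. Throughout, let $s^{\star}(G)$ denote the smallest positive root of $f^G(s)$, i.e.\ the critical $s$ for the two-colouring bipartition of $G$.

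First, observe that $f^G_z(s)$ depends on $z$ only through which edges cross the bipartition. Writing $G_z=(V,E_z)$ for the edge subgraph of $G$ containing only those edges $\{n,m\}$ with $z_n\oplus z_m=1$, Eqn.~(\ref{eqn:thermal}) gives $f^G_z(s)=f^{G_z}(s)$, where $z$ itself serves as a valid two-colouring of $G_z$. Hence the lemma reduces to showing that whenever $G'\subseteq G$ arises from $G$ by deleting edges (while preserving a common two-colouring $z^*$), we have $s^{\star}(G')\ge s^{\star}(G)$; applied to $G_z$, this gives the claim.

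Second, I will establish the single-edge version of this inequality via the algebraic identity
\begin{equation*}
f^G(s)-f^{G-e}(s)=-2s^2\,f^{G\setminus\{n,m\}}_{x,z^*}(s),
\end{equation*}
where $e=\{n,m\}$, $G\setminus\{n,m\}$ is the induced subgraph on the remaining $N-2$ vertices, and $x$ is the bit string with $x_i=1\oplus E_{n,i}\oplus E_{m,i}$. The derivation isolates the terms in Eqn.~(\ref{eqn:thermal}) with $y_n=y_m=1$ (the only ones on which the extra factor $(-1)^{y_n y_m}$ contributed by $e$ acts non-trivially), pulls out $s^2$, absorbs the signs from the edges incident to $n$ or $m$ into the phase $(-1)^{x\cdot\tilde y}$, and lets the edges internal to $V\setminus\{n,m\}$ reconstruct $f^{G\setminus\{n,m\}}_{x,z^*}(s)$.

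To finish, Lemma~\ref{lemma:3} applied to $G\setminus\{n,m\}$ yields $f^{G\setminus\{n,m\}}_{x,z^*}(s)\ge f^{G\setminus\{n,m\}}(s)$ on $[0,s^{\star}(G\setminus\{n,m\})]$, and Lemma~\ref{lemma:1} gives $s^{\star}(G)\le s^{\star}(G\setminus\{n,m\})$, so $f^{G\setminus\{n,m\}}_{x,z^*}(s)\ge 0$ on $[0,s^{\star}(G)]$. Therefore $f^G(s)\le f^{G-e}(s)$ on that interval, forcing $s^{\star}(G)\le s^{\star}(G-e)$. Iterating over every non-crossing edge converts $G$ into $G_z$ and yields $s^{\star}(G)\le s^{\star}(G_z)$, which via the first step is precisely the critical $s$ for the bipartition $z$. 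The main obstacle I anticipate is the careful sign bookkeeping needed to verify the displayed identity, in particular confirming that the signs from edges incident to $n$ or $m$ collapse correctly into the vector $x$; once that is pinned down, the conclusion is a routine application of the two preceding lemmas.
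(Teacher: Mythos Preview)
Your approach is structurally the same as the paper's: you derive the identical single–edge identity (the paper writes it as a comparison of $z_{nm}=0$ versus $z_{nm}=1$, you write it as $f^G-f^{G-e}$, and these agree via your own observation $f^G_z=f^{G_z}$), and you reduce the monotonicity to positivity of an $f$ on the induced subgraph $G\setminus\{n,m\}$.

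There is, however, a real gap in how you justify that positivity. Lemma~\ref{lemma:1} compares \emph{entanglement} thresholds, i.e.\ the quantity $s^G_{\min}:=\min_{x,z}\{\text{first root of }f^G_{x,z}\}$, not the root $s^\star(G)$ of the single polynomial $f^G_{11\ldots1,z^*}$. So ``Lemma~\ref{lemma:1} gives $s^{\star}(G)\le s^{\star}(G\setminus\{n,m\})$'' is not what that lemma delivers; what it gives is $s^G_{\min}\le s^{G\setminus\{n,m\}}_{\min}$, and you cannot identify these with the $s^\star$'s without already knowing Lemma~\ref{lemma:4} for the graphs in question. Likewise, the statement you attribute to Lemma~\ref{lemma:3}, that $f^{G\setminus\{n,m\}}_{x,z^*}\ge f^{G\setminus\{n,m\}}$ on $[0,s^\star(G\setminus\{n,m\})]$, is only established by that lemma's proof on $[0,s^{G\setminus\{n,m\}}_{\min}]$; extending it to $[0,s^\star(G\setminus\{n,m\})]$ again presupposes Lemma~\ref{lemma:4} for the smaller graph.

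The paper sidesteps this entirely by never introducing $s^\star$: it works on the interval $[0,s^G_{\min}]$, where all $f^G_{x,z}\ge0$ by definition, and then Lemma~\ref{lemma:1} directly gives $f^{G\setminus\{n,m\}}_{x',z'}\ge0$ there. The conclusion is that \emph{at} $s^G_{\min}$ the minimiser must have every $z_{nm}=1$, which for a two-colourable graph is realised by the two-colouring. Your argument is easily repaired in the same way (replace $s^\star(G)$ by $s^G_{\min}$ throughout and drop the appeal to Lemma~\ref{lemma:3}); alternatively, if you want to keep $s^\star$, you must make the induction on $|V|$ explicit so that Lemmas~\ref{lemma:3} and~\ref{lemma:4} are already available for $G\setminus\{n,m\}$.
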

\begin{proof}
Consider each pair of nearest-neighbour variables $z_n\oplus z_m$ as if it were an independent variable $z_{nm}$. We can now apply the same technique, comparing when one of these variables is either 0 or 1:
$$
f^G_{x,(z\setminus nm)0}(s)-f^G_{x,(z\setminus nm)1}(s)=2s^2f^{G\setminus n,m}_{(x\setminus n,m)\oplus k,z\setminus (nl),(ml)}(s)
$$
where the bits of $k$, $k_l=E_{n,l}z_{nl}+E_{m,l}z_{ml}$. Again, in the region we're interested in, the right hand side is positive, meaning that the optimal choice is to set the value $z_{nm}=1$. Thus, we should set all pairs $z_n\oplus z_m=1$. In general this is not possible because the variables are not really independent. However, for two-colourable graphs, the natural bipartition allows us to do this.
\end{proof}

\subsection{The Classical Ising Model and Computational Complexity Considerations} \label{sec:ising}

We want to evaluate $f_{x,z}(s)$ for some bipartition $z$ of a thermal graph. With a fixed bipartition, then the partial transpose condition is unaffected by unitaries that are local with respect to the bipartition. As such, we can use controlled-phase gates to remove any edges of the graph that are on one side of the bipartition, and this always reduces us to a two-colourable graph on which we wish to evaluate Eqn.~(\ref{eqn:thermal}) and for which $z$ is the two-colouring bipartition. We could just as easily rewrite this as
$$
f(s)=\sum_{y\in\{0,1\}^N}e^{\ln(-s)\sum_ny_n+i\pi\sum_{\{n,m\}\in E}y_ny_m},
$$
which is equivalent to the partition function of the classical Ising model for the same graph,
$$
Z=\sum_{S\in\{-1,1\}^N}e^{\beta'\left(\sum_nh_nS_n+J\sum_{\{n,m\}\in E}S_nS_m+k\identity\right)}
$$
where, since the $S_n$ take values $\pm 1$, we rescale the $y_n$ with $S_n=2y_n-1$. To complete this mapping, we identify
\begin{eqnarray}
\beta' J&=&i\pi/4	\nonumber\\
\beta' k&=&\frac{N}{2}\ln(-s)+\frac{i\pi}{4}|E|	\nonumber\\
\beta' h_n&=&\half\ln(-s)+\frac{i\pi}{4}d_n,	\nonumber
\end{eqnarray}
where $|E|$ corresponds to the number of edges in $G$ and $d_n$ is the degree (coordination number) of site $n$. Of particular interest would be a solution in 2D. However, the general solution to the 2D Ising model for arbitrary $h$ is unknown. Indeed, there are known instances of field strengths for which the problem is NP-hard \cite{barahona}. This leads to the suspicion that identifying the exact critical temperatures is a hard problem.

From \cite{simone,karpinski}, we know that $f(1)$ can be efficiently evaluated. However, were we able to evaluate $f(s)$ at several values of $s$, we would be able to resolve the values of
$$
\sum_{\stackrel{y\in\{0,1\}^N}{w_y=k}}(-1)^{y^T\text{ltr}(A)y}
$$
for all $k$, where $\text{ltr}(A)$ is the lower triangular component of the adjacency matrix $A$ (this is equivalent to the graph theoretic problem of calculating the number of induced subgraphs of $G$ of $k$ vertices which have an odd number of edges). There is a closely associated problem for which there exists a dichotomy theorem \cite{cardinality} between instances of the problem which are efficiently solvable and those which are \#P-complete, based on whether the logical expressions represented by $\sum_{i,j}\text{ltr}(A)_{ij}y_iy_j$ can be written using a constraint language which is affine of width 2 (i.e.~as a string of conjunctions, where each term is only an {\sc xor} of two variables $y_i$), and there are many \#P-complete tasks that remain so, even under the restriction to planar bipartite graphs \cite{vadhan}. Unfortunately, however, classifying the graphs that would give us an efficient solution is related to a long standing open problem in complexity theory, $\oplus2${\sc sat} \cite{valiant}. As such, while we are left believing that evaluating $f(s)$ is hard at most values of $s$ for certain graphs, we have not succeeded in proving this.

\section{1D Cluster State} \label{sec:1D}

There are many graphs that we could consider, and try to evaluate the critical temperatures. Since our study is motivated by the wish to understand the entanglement properties in thermal states of naturally arising Hamiltonians, we ideally wish to understand regular lattices such as the 1D and 2D square lattices, and compare them to the case of GHZ states studied in \cite{Kay:2010}: does the finite degree of each vertex mean that entanglement only persists to a finite temperature? We shall start by applying our formalism to the 1D chain. We want to determine when $f^N(s)=0$ in order to find if the state is entangled according to the PPT criterion. If $f(s)$ is negative, the state is certainly entangled, but, if the value is positive, we may learn very little. One can verify that
$$
f^N(s)=(1+s)f^{N-1}(s)-2sf^{N-2}(s),
$$
which can be solved to find $f^N(s)$ for a specific $s$. The solution is of the form $f^N=Ar_+^N+Br_-^N$, where $r_\pm$ are the roots of the equation
$$
r^2-(1+s)r+2s=0.
$$
The coefficients $A$ and $B$ allow us to match the initial conditions of $f^1(s)=1-s$ and $f^2(s)=1-2s-s^2$.
We want to find the value of $s$ for which $f^N(s)\rightarrow 0$ in the large $N$ limit. This is readily achieved if $|r|<1$. However, we also want to know the largest value of $s$ that achieves this without undergoing a sign change as $N$ increases -- this assures us that we have the smallest root. These sign changes are brought about only if the $r_{\pm}$ are complex. The roots are real in the region $0<s\leq 3-2\sqrt{2}$ and thus we have $f^{N\rightarrow\infty}(3-2\sqrt{2})\rightarrow 0$ without any sign changes, which means that in the thermodynamic limit, NPT bound entanglement persists until $\tanh(\beta\Delta/2)=3-2\sqrt{2}$. Fig.~\ref{fig:cluster} compares this limit to the numerical values calculated for finite $N$. 

\begin{figure}
\begin{center}
\includegraphics[width=0.45\textwidth]{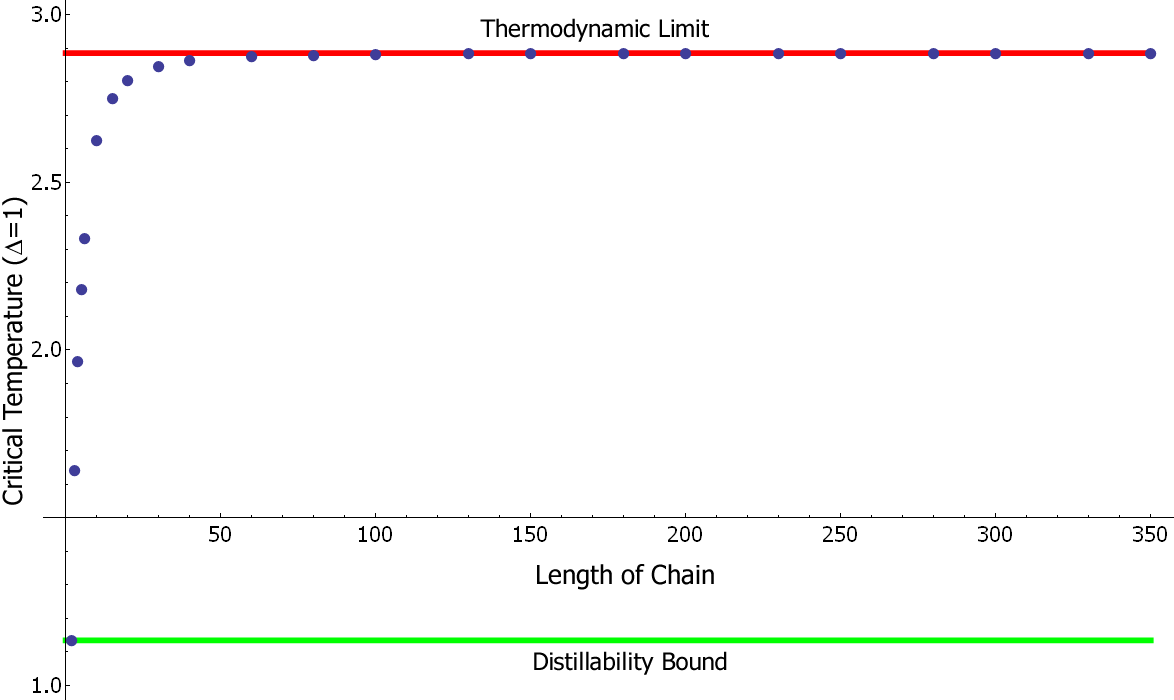}
\end{center}
\vspace{-0.5cm}
\caption{The critical temperature for the existence of bound entanglement in the linear cluster state compared to the region of distillability and the $N\rightarrow\infty$ limit.}\label{fig:cluster}
\vspace{-0.5cm}
\end{figure}

\subsection{Full Separability}

We now know a temperature up to which there is certainly a bipartition with respect to which the system remains NPT and hence entangled. It could be that NPT doesn't detect all the bipartite bound entanglement or that there is some multipartite bound entanglement present. Hence, it will be useful to establish a limit on when there can be entanglement by giving a threshold temperature above which the state is fully separable. If this comes out to be the same as the above PPT limit, all the entanglement is bipartite bound entanglement, and is detected by PPT.

We use essentially the same considerations as in \cite{Kay:2010}, that is to say that we define two stabilizers $K_x$ and $K_y$ to be {\em compatible} if they have a simultaneous product state decomposition. A sufficient condition for this is that for any site where $K_x$ is a Pauli matrix $\sigma\neq\identity$, at the same site, $K_y$ should either by $\sigma$ or $\identity$, and vice versa\footnote{This is not a necessary condition and, for completeness, we will illustrate an alternative way of making the decomposition in Appendix \ref{sec:star}.}. For instance, on a 3-qubit chain, the 3 terms
\begin{eqnarray}
K_1&=&X\otimes Z\otimes\identity	\nonumber\\
K_3&=&\identity\otimes Z\otimes X	\nonumber\\
K_1K_3&=&X\otimes\identity\otimes X	\nonumber
\end{eqnarray}
have a common product basis of $\ket{\pm}\ket{0}\ket{\pm}$ and $\ket{\pm}\ket{1}\ket{\pm}$, but are incompatible with $K_2$. For a 3-qubit chain, these are the only compatible terms. This means that the expression
$$
\identity+K_1+K_3+K_1K_3
$$
is a separable state (we have added sufficient $\identity$ such that the smallest eigenvalue is 0). So, let us return to the 3-qubit thermal state
$$
\rho=\frac{1}{8}\prod_{n=1}^3(\identity+sK_n),
$$
and expand it as
\begin{eqnarray}
8\rho&=&s^2(\identity+K_1)(\identity+K_3)	\nonumber\\
&&+s(\identity+K_2)+s^2(\identity+K_1K_2)	\nonumber\\
&&+s^2(\identity+K_2K_3)+s^3(\identity+K_1K_2K_3)	\nonumber\\
&&+(s-s^2)(\identity+K_1)+(s-s^2)(\identity+K_3)	\nonumber\\
&&+\identity(1-2(s-s^2)-s-3s^2-s^3).	\nonumber
\end{eqnarray}
By construction, each of the terms is a separable density matrix provided the coefficients in front of them are positive. In particular, this means that the coefficient in front of the $\identity$ term, $0<1-3s-s^2-s^3=f^3(s)$.

We now aim to perform a similar construction for chains of length $N$. We will define a {\em basic string} $y$ to be a string which is of the form $00\ldots 011\ldots 100\ldots 0$, i.e. it is a continuous block of $w_y$ 1s, with some unspecified number of 0s on either side. A {\em basic stabilizer} is a stabilizer $K_y$ where $y$ is a basic string. Two basic stabilizers have a compatible basis if and only if their basic strings are non-overlapping, and separated by at least one vertex. Let $b_y$ denote the decomposition of $y$ into its set of compatible basic strings, i.e.~$K_y=\prod_{x\in b_y}K_x$. The members of $b_y$ are the generators of a group $B_y$. We are also interested in the set of strings $C_y$ defined such that $x\in C_y$ if $y\in B_x$, i.e.~it is the set of strings which are compatible with $y$ and have all bits set to 1 if the corresponding bit of $y$ is set to 1. Since a basic string $y$ necessarily satisfies the property
$$
(-1)^{w_y}(-1)^{\sum_{n=1}^{N-1}y_ny_{n+1}}=-1,
$$
then evaluation of $(-1)^{w_y}(-1)^{\sum_{n=1}^{N-1}y_ny_{n+1}}$ for any general $y$ simply returns the parity of $|b_y|$, the number of basic strings that $y$ is formed from.

With these definitions in place, it is easier to state what we wish to prove,
\begin{equation}
\rho=\frac{1}{2^N}\sum_{y}f^y(s)\prod_{x\in b_y}(\identity+K_x)	\label{eqn:rho}
\end{equation}
where
$$
f^y(s)=(-1)^{|b_y|}\sum_{x\in C_y}s^{w_x}(-1)^{|b_x|}.
$$
This quantity $f^y(s)$ is exactly the same as evaluating $s^{w_y}f(s)$ over the induced subgraph specified by taking the chain and removing all the vertices, and their neighbours, where $y_n=1$. Hence, we learnt from Lemma \ref{lemma:1} that in the region where the state is PPT with respect to all bipartitions, all $f^y(s)\geq 0$, and the interpretation of the decomposition giving a separable state will be correct.

In order to prove our desired decomposition, we start by reverting to the more general notation of
$$
\rho=\frac{1}{2^N}\sum_ys_yK_y,
$$
which will make it easier to keep track of a given term $s_y=s^{w_y}$. Observe that if $C_y=y$, i.e.~$y$ is not compatible with any strings of weight greater than $w_y$, then the coefficient $f^y(s)=s_y$.

If $y$ is not a basic string, $K_y$ will decompose, and we have to remove $s_y$ from the coefficients in front of all the $K_x$ terms where $x\in B_y$. If $K_y=K_xK_a$ and $a$ is a basic string, then the coefficient in front of the $K_x$ term will be altered by $-s_y$ because of the weight $s_yK_x$ in the $s_y\prod_{z\in b_y}(\identity+K_z)$ expansion. Now consider instead that $K_y=K_xK_aK_b$ where $a$ and $b$ are basic strings. The expansion of $K_y$ has altered the $K_aK_x$ and $K_bK_x$ coefficients by $-s_y$, and the expansion of all 3 of these terms alters the $K_x$ coefficient by $2s_y-s_y=s_y$ (in addition to the $s_{a+x}$ and $s_{b+x}$ coefficients which we can follow through independently in exactly the same way). It now follows inductively that if $K_x$ and $K_y$ differ by $r$ basic strings, the $K_x$ coefficient is altered by $(-1)^rs_y$. Now, from our previous observations, we have that
$$
(-1)^r=(-1)^{|b_x|+|b_y|}.
$$
So, if we start with $\rho=\frac{1}{2^N}\sum_ys_yK_y$ working from the strings $y$ of highest weight, replacing $K_y$ with $\prod_{x\in b_y}(\identity+K_x)$, we end up with the decomposition in Eqn.~(\ref{eqn:rho}), which is separable provided all the coefficients are positive.

It follows immediately that for all thermal graph states where the graph is a chain, or indeed a tree, the state being PPT with respect to all possible bipartitions is necessary and sufficient for separability.

Why does this proof apply to trees but not to more general graphs? It was vitally important in the construction that the compatibility of the stabilizer operators was hierarchical, meaning that for every $x\in B_y$, $B_x\subseteq B_y$. In a ring of $N$ qubits, for instance, this does not hold -- $\prod_{n=1}^NK_n$ is compatible with $\prod_{n=1}^{N/2}K_{2n}$ and $\prod_{n=0}^{N/2}K_{2n+1}$, and then $\prod_{n=0}^{N/2}K_{2n+1}$ is compatible with each $K_{2n+1}$, but $\prod_{n=1}^NK_n$ is not compatible with $K_n$. In Appendix \ref{app:twocol}, we will show how far we have been able to extend our results to this more general case. Can the proof extend to graph-diagonal states other than thermal states? There will certainly be some for which this is true, and in Appendices \ref{app:global} and \ref{app:local} we look at some common noise models.

\section{Entanglement Witnesses Saturating PPT} \label{sec:entwitness}

Using the formalism that we've developed, it's relatively simple to find an entanglement witness for a given graph $G$ that will saturate the PPT threshold for any state which is diagonal in the graph state basis. To do this, we measure the observables
$$
W_{x,z}=\!\frac{1}{2^N}\!\!\sum_{y\in\{0,1\}^N}\!(-1)^{x\cdot y}(-1)^{\sum_{\{n,m\}\in E}y_ny_m(z_n\oplus z_m)}K_y.
$$
For any arbitrary density matrix $\rho$, we calculate
$$
\Tr(W_{x,z}\rho)=\!\frac{1}{2^N}\!\!\!\!\!\sum_{y\in\{0,1\}^N}\!\!\!\!\!\!(-1)^{x\cdot y}(-1)^{\sum_{\{n,m\}\in E}y_ny_m(z_n\oplus z_m)}s_y
$$
i.e.~$f_{x,z}/2^N$, the eigenvalues of the partial transpose of the state about bipartition $z$. Hence, for a graph diagonal state, finding $\Tr(W_{x,z}\rho)<0$ for any $x$ or $z$ proves it's entangled. Moreover, such witnesses saturate the PPT bounds for this class of states. An important observation is that this is a genuine entanglement witness i.e.~for any state $\rho=\sum_{x,y}\mu_{x,y}\ket{\lambda_x}\bra{\lambda_y}$, which may not be diagonal in the graph state basis, finding one of the observable to be negative witnesses the fact that it's entangled. To prove this, note that any $\rho$ can be converted, via local probabilistic operations, into a graph diagonal state $\rho_d=\sum_x\mu_{x,x}\proj{\lambda_x}$ with the same diagonal elements \cite{aschauer:04}, and hence the same values of $s_y=\sum_x\mu_{x,x}(-1)^{x\cdot y}$. So, if $\rho$ is fully separable, it will have the same value of $\Tr(W\rho)$ as $\rho_d$, which we know will be positive since the local conversion to a diagonal state cannot introduce entanglement.

Since trees have the existence of an NPT bipartition as a necessary and sufficient condition for the existence of entanglement in the thermal state, the witnesses are optimal in this case.

\subsection{Implementing the Entanglement Witnesses}

So far in this section, we have seen how entanglement witnesses can be designed for any given graph which work up to the transition between NPT and PPT. For a vast range of GHZ states \cite{Kay:2010}, and the thermal states of trees, we know that this transition point is also the transition to full separability. Thus, in experiments, we will be extremely interested in measuring these witnesses in order to detect entanglement. However, measuring $\Tr(W_{x,z}\rho)$ appears to be a daunting task since $W_{x,z}$ is a sum of exponentially many terms $K_y$. We will now show how this worry can be circumvented; we can efficiently measure $\Tr(W_{x,z}\rho)$ up to an additive approximation. The protocol that we will follow is largely inspired by studies of the evaluation of tensor networks on a quantum computer \cite{landau} which proved particularly useful for calculating partition functions in classical statistical mechanics. Related problems also arose in \cite{quad_weights}.

Let us assume that we are given $k$ copies of $\rho$ with which we are to estimate $\Tr(W_{x,z}\rho)$. This is equivalent to combining the probabilities of detection of each of the graph states $\ket{\psi_a}$ via
$$
\sum_{a\in\{0,1\}^N}\lambda_a\bra{\psi_a}\rho\ket{\psi_a},
$$
where
$$
\lambda_a=\frac{1}{2^N}\!\!\sum_{y\in\{0,1\}^N}\!\!\!(-1)^{(x\oplus a)\cdot y}(-1)^{\sum_{\{n,m\}\in E}y_ny_m(z_n\oplus z_m)}.
$$
The na\"ive approach would be to measure each of these probabilities independently, and then combine them, but that is a very inefficient approach. Instead, we would like to measure $\Tr(W\rho)$ directly. Consider the circuit
$$
\Qcircuit @C=1em @R=.7em {
\lstick{\ket{0}_A} 		& \qw			& \gate{H}	& \ctrl{2}	& \ctrl{2}	& \ctrl{2} & \gate{H} & \meter \\
\lstick{\rho} 			& \gate{CP_E} 	& \gate{H}	& \control \qw	& \qw		& \qw			& \qw & \qw \\
\lstick{\identity/2^N} 	& \qw 			& \qw 		& \gate{Z}	& \gate{Z_x}	& \gate{CP_E^z}	& \qw & \qw
}
$$
The top wire, $A$, is a single ancilla qubit, whereas the other two wires represent $N$ qubits on which gates are applied transversally, except for $CP_E$ which indicates that controlled-phases need to be applied between the qubits of a given register for those pairs of qubits corresponding to an edge of the graph. $CP_E^z$ is a modification of this in which a controlled phase is only applied between two vertices $n$ and $m$ if joined by an edge of the graph and $z_n\oplus z_m=1$.

For any $\rho=\sum_{a,b}\mu_{b,a}\ket{\lambda_b}\bra{\lambda_a}$, the first two gates are just the sequence that maps $\ket{\lambda_a}$ into a computational basis state $\ket{a}$, so this circuit simply represents the input of a state
$$
\sigma=\left(\sum_{a,b}\mu_{b,a}\ket{b}\bra{a}\right)\otimes\frac{\identity}{2^N}
$$
to a Hadamard test using a $2N$ qubit unitary $V$. The Hadamard test has a probability of finding the ancilla in state $\ket{0}$ of
$$
p_0=\half\left(1+\Tr(\sigma V)\right)
$$
if $\Tr(\sigma V)$ is real. After $k$ repetitions, the probability of incorrectly estimating the value of $\Tr(\sigma V)$ to an accuracy $\varepsilon$ is bounded by Chernoff's bound to be no worse than $2e^{-2k\varepsilon^2}$, so using $k\sim O(\varepsilon^{-2})$ gives a constant failure probability, independent of $N$. It remains to argue that $\Tr(\sigma V)=\Tr(W_{x,z}\rho)$, which is readily verified. We conclude that using $k\sim O(\varepsilon^{-2})$ copies of $\rho$ allows us to estimate $\Tr(W\rho)$ (which is bounded between $\pm 1$) to an accuracy $\varepsilon$, independent of system size, with a number of gates $O(Nd)$ where $d$ is the maximal degree of any vertex.

The difficulty with any additive approximation is the accuracy scale that we need to achieve. Given that the eigenvalues are bounded between $\pm 1$, we don't expect to significantly outperform the protocol we've just described. However, if we look at the 1D thermal chain, for instance, then for the two-colouring bipartition and $x=\{11\ldots 1\}$, the size of the eigenvalue scales as $-(\half s)^{N/2}$ (see Appendix \ref{app:global}), which shows that we would require $O((2/s)^{N})$ copies of $\rho$ to have any hope to detect entanglement using this bipartition, and we will most likely be doing this at finite $s$. How, then, are we to proceed? There is one saving grace in that we can sum many eigenvalues together. Indeed, this happens naturally in our protocol as a result of the trace operation. Clearly, if several eigenvalues sum to give a negative value, then at least one of them is negative, and so this also acts as an entanglement witness. For instance, we can use the bipartition $z$ to define a subset of vertices, $S$, which have at least one edge crossing the bipartition. The vertices in $S'=V\setminus S$ can then be disentangled from the graph state without changing the eigenvalues of the partial transpose operation. If we just trace out those qubits, however (without applying the transversal controlled-controlled-phase on those qubits), then we end up summing $2^{|S'|}$ eigenvalues. For thermal states, this provides the ability to measure the entanglement witness on any induced subgraph of $|S|$ qubits, and the required accuracy is only exponential in $|S|$ not $N$. The reason for it being particularly useful for the thermal state is that all of the qubits that we remove are separable from the rest of the system, in the state $\half(1+s)\proj{0}+\half(1-s)\proj{1}$. Since for the thermal state (see Lemma \ref{lemma:1}) these bipartitions/induced subgraphs induce a partial ordering in the entanglement detection ability, each of which can detect entanglement up to some finite temperature, this is extremely promising for the detection of entanglement in this case. Moreover, we observe from Fig.~\ref{fig:cluster} that even short lengths ($|S|$ independent of $N$) of induced subgraph can detect almost all of the temperature range of any longer chain.

\subsubsection{Evaluating Critical Temperatures on a Quantum Computer}

The technique that we have used for implementing the entanglement witnesses also suggests that we might be able to implement a quantum computation to evaluate $f(s)$ to some level of approximation, such that we might search for the critical temperature. Although the problem of exponentially vanishing eigenvalues will remain, we will now see how the exponent is at least reduced compared to the witnesses above.

Consider the following circuit
$$
\Qcircuit @C=1em @R=.7em {
\lstick{\ket{0}_A} 		& \gate{H}	& \ctrl{1}		& \ctrl{1}		& \gate{H} & \meter \\
\lstick{\rho} 			& \qw		& \gate{CP_E^z} & \gate{Z_x}	&  \qw & \qw
}
$$
where
$$
\rho=\frac{1}{(1+s)^N}\left(\proj{0}+s\proj{1}\right)^{\otimes N}.
$$
The use of $\rho$ is what distinguishes this algorithm from the entanglement witnesses -- if we are provided with the thermal state, there is no simple way to convert it into this form of $\rho$, but we can certainly create it if we know the value of $s$ that we want to test. The probability of getting the $\ket{0}$ result on the top register is given by
$$
\frac{1}{2}\left(1+\frac{f(s)}{(1+s)^N}\right),
$$
which allows us to evaluate $f(s)/2^N$ (the eigenvalues of the partial transpose operation), although exponential accuracy is still required in order to detect entanglement close to the separability threshold. There are some further improvements that can be made using the observations made in App.~\ref{app:twocol} for the classical algorithm\footnote{Namely, one can use specifics of the graph structure, such as its bipartite nature, and the existence of a subgroup within the group of bit strings $\{0,1\}^N$ under addition modulo 2.}, although they do not significantly impact the scaling.

\section{Stability to Perturbations}

Calculating the temperature at which a particular thermal state becomes PPT with respect to all bipartitions, and being able to detect it, are important steps. In the real world, however, we will never generate exactly the Hamiltonian that we want. Thus, we must examine how stable the entanglement is to small perturbations. The simplest possible case is to redefine the Hamiltonian as
$$
H=-\half\sum_n\Delta_nK_n
$$
with $s_n=\tanh(\beta\Delta_n/2)$, corresponding to different dephasing rates on each qubit. Due to the numerical work of \cite{leandro}, we already expect that the critical temperature will not be significantly affected. In the case of a 1D chain, we can derive a recursion relation for the polynomial,
$$
f(s_k\ldots s_N)=(1+s_k)f(s_{k+1}\ldots s_N)-2s_kf(s_{k+2}\ldots s_N),
$$
which yields the temperature at which the state becomes positive with respect to the partial transpose taken on any possible bipartition. Numerically, we could sample the $\Delta_n$ from some distribution -- we chose a Gaussian distribution of width $\sigma$ -- and calculate the critical temperature. This has been plotted in Fig.~\ref{fig:cluster_pert}, indicating a substantial robustness. Analytically, we can note that $\min s_k=3-2\sqrt{2}$ gives a lower bound on the critical temperature in the thermodynamic limit.

\begin{figure}
\begin{center}
\includegraphics[width=0.45\textwidth]{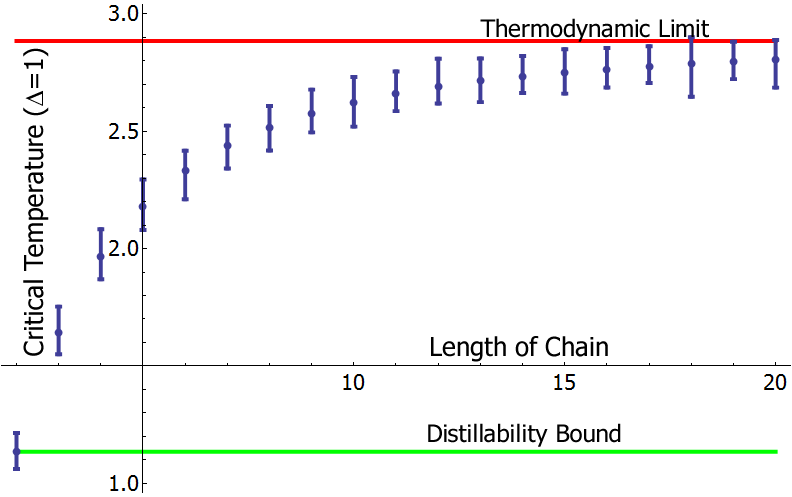}
\end{center}
\vspace{-0.5cm}
\caption{Comparison of the distribution of critical temperatures (the error bars) with the unperturbed critical temperature (the circles) for a thermal chain when the weights of the stabilizers are perturbed according to a Gaussian distribution, centred on $\Delta=1$ with $\sigma=0.1$, and then averaged over 100 samples.}\label{fig:cluster_pert}
\vspace{-0.5cm}
\end{figure}

While this is a relevant perturbation to consider, perturbations are likely to exhibit much greater variety. There is a further case where we can make significant progress, although it remains far from the case of arbitrary local magnetic field: we add a perturbative field of the form
$$
-\half\sum_{n=1}^N\delta_nZ_n.
$$
As $|\delta|$ increases, the ground state is progressively less entangled.
The terms $K_n'=\Delta_nK_n+\delta_nZ_n$ mutually commute, $[K_n',K_m']=0$, and $K_n'^2=(\Delta_n^2+\delta_n^2)\identity$, so one can write the thermal state as
$$
\rho=\frac{1}{2^N}\prod_{n=1}^N\left(\identity+\tanh(\half\beta\sqrt{\Delta_n^2+\delta_n^2})K_n'/\sqrt{\Delta^2+\delta^2}\right).
$$
For simplicity of notation, we'll take all the $\Delta_n$ equal, and all the $\delta_n$ equal, although none of the subsequent analysis is contingent upon it. The thermal state is expanded as
$$
2^N\rho=\sum_{x\in\{0,1\}^N}\sum_{y\in\{0,1\}^N}\left(\frac{s}{\sqrt{\Delta^2+\delta^2}}\right)^{w_x+w_y}Z_xK_y
$$
although the summation over $y$ is restricted to cases where $y_n=0$ if $x_n=1$. We can now take a partial transpose with respect to a bipartition $z$,
\begin{eqnarray}
2^N\rho^{PT}&=&\sum_{x\in\{0,1\}^N}\sum_{y\in\{0,1\}^N}\left(\frac{s}{\sqrt{\Delta^2+\delta^2}}\right)^{w_x+w_y}\times	\nonumber\\
&&\times Z_xK_y(-1)^{\sum_{\{n,m\}\in E}y_ny_m(z_n\oplus z_m)}.	\nonumber
\end{eqnarray}
No longer is this a product of commuting terms, so the eigenvectors are not as obvious as the case where $\delta=0$. Nevertheless, the minimum eigenvalue is no larger than $\bra{\varphi}\rho^{PT}\ket{\varphi}$ for any $\ket{\varphi}$. If we take $\ket{\varphi}=\ket{\psi_x}$, this is entirely equivalent to measuring the expectation value of the entanglement witness $W_{x,z}$. Let us take $\ket{\psi_{11\ldots 1}}$. This gives
$$
\sum_{y\in\{0,1\}^N}\left(\frac{-s\Delta}{\sqrt{\Delta^2+\delta^2}}\right)^{w_y}(-1)^{\sum_{\{n,m\}\in E}y_ny_m(z_n\oplus z_m)},
$$
since
\begin{equation*}
\Tr(K_yK_zZ_x)=2^N\delta_{y,z}\delta_{x,00\ldots 0}.	
\end{equation*}
This is exactly the expression we had for no perturbation, except that the effective $s$ has changed. The critical temperature $\beta_\delta$ at which the expectation value of this state is zero is hence related to the unperturbed $\beta_0$ by
$$
\frac{\Delta}{\sqrt{\Delta^2+\delta^2}}\tanh(\half\beta_\delta\sqrt{\Delta^2+\delta^2})=\tanh(\half\beta_0\Delta).
$$
Furthermore, $\beta_\delta$ is an upper bound on the true critical $\beta$. If one examines this expression, it is remarkably robust. For the infinite 1-D chain, we find that the lower bound on the critical temperature $T$ decreases monotonically, but is still at $99.0\%$ of the unperturbed value for $\delta=1$. 

In the special case of a 1D geometry, we could make progress in analysing other perturbative fields such as $X$, $Y$ and $ZZ$ through transformation into a standard form which can be analysed via the Jordan-Wigner transformation \cite{kay-2006b,doherty2}. However, we have not succeeded in analytically calculating the expectations $W_{x,z}$.

\section{Conclusions}

We have shown that, for the 1D thermal graph state, bound entanglement persists to a finite temperature, in contrast to the extremely non-local GHZ systems \cite{Kay:2010}, which have entanglement persistence that grows with system size. In the thermodynamic limit, the critical temperature for the chain is
$$
T_{\text{separable}}=\frac{\Delta}{k_B\ln\sqrt{2}},
$$
in comparison to the distillability threshold
$$
T_{\text{distillable}}=\frac{\Delta}{k_B\ln(\sqrt{2}+1)}.
$$
Note that as the length of the chain increases, so does the temperature for the persistence of bound entanglement, unlike the distillable entanglement. Furthermore, this entanglement is all bipartite bound entanglement; there is no multipartite bound entanglement present.

We have given entanglement witnesses that detect all the PPT entanglement that is present in any graph-diagonal state of any graph, and are hence optimal if the state is the thermal state of a tree -- the state is fully separable if not NPT. We have discussed the implementation of these witnesses on a quantum computer. The entanglement in the thermal state is robust to perturbations in the Hamiltonian. In the appendices, we have discussed how optimality of the PPT condition might extend to other graph-diagonal states for chains, and for thermal states of graphs containing loops. We conjecture that for $k$-colourable graphs, the thermal state becomes fully separable when it becomes $k$-separable. However, it seems likely that determining the critical temperature for graphs such as the 2D lattice will be a computationally hard problem. Resolving these questions are interesting avenues for future study.

This work was supported by the National Research Foundation \& Ministry of Education, Singapore, and Clare College, Cambridge. The author thanks Simone Severini and Ravishankar Ramanathan for useful conversations.  

\appendix

\section{Global Depolarising Noise on the Chain} \label{app:global}

In \cite{Kay:2010}, the optimality of the PPT condition for detecting entanglement was proved for a vast parameter regime, which simply included thermal noise as a special case. Any instance in which all the coefficients $s_y$ were positive was also covered. As such, we might ask how far beyond the thermal state our results apply, even just for chains. This is certainly an interesting avenue for future study, however proofs are going to be vastly more difficult than for GHZ states. For instance, in the full separability proof, for star graphs, if $K_x$ is compatible with $K_y$ and $K_z$, then $K_y$ and $K_z$ are compatible, and this makes it very easy to group terms together. However, this is not the case for graphs such as the chain (consider, for instance, $K_1$, $K_1K_3$ and $K_1K_4$). Still, we can briefly consider how likely it seems that this should hold. Do the PPT and full separability thresholds coincide? Are the optimal eigenvector and bipartition the same as for the thermal state? Do the NPT and distillability thresholds coincide?

We shall provide some hints to these answers by considering the graph-diagonal state
$$
\rho=\identity+\alpha\proj{\psi},
$$
which we will again address for the linear graphs. This noise model was considered in \cite{Kay:2007}, where it was observed numerically, using the distillation protocol of \cite{dur:03,aschauer:04}, that distillation of the state was possible at $\alpha\gtrsim 2^{\lfloor N/2\rfloor}$, but the impossibility of distillation was only proven for $\alpha\leq 2$. We will analytically show that for even $N$, the distillation protocol of \cite{dur:03,aschauer:04} does indeed distil for all $\alpha>2^{N/2}$, but we shall start by showing the converse, that distillation is impossible for $\alpha\leq 2^{\lfloor N/2\rfloor}$, and our particular reason for doing so is to illustrate that the choices of $x$ and $z$ that are optimal for the thermal state are not optimal in this case.

The eigenvalues of $\rho$, indexed by $x$, under bipartition $z$ are given, up to normalisation, by
$$
2^N+\alpha\sum_{y\in\{0,1\}^N}(-1)^{x\cdot y}(-1)^{\sum_ny_ny_{n+1}(z_n\oplus z_{n+1})},
$$
where the sum is just $f_{x,z}(1)$. Let's consider taking the two-colouring bipartition of the chain, and $x=\{11\ldots 1\}$. Using the results of Sec.~\ref{sec:1D}, we can evaluate this and find that the state cannot be distilled if
$$
2^N+2^{(N+1)/2}\alpha\cos(\pi(N+1)/4)\geq 0.
$$
Hence, for $N\text{ mod }8=2,3,4$, we have that the critical value is $\alpha=2^{\lfloor N/2\rfloor}$. For other values of $N\text{ mod }8$, there are other choices of $x$ which give this result. For instance, for $N\text{ mod }8=4,5,6$, the same value is achieved using $x=\{11\ldots 100\}$.

Now let us instead consider $z=\{100\ldots 0\}$ and $x=\{1100\ldots 0\}$. The corresponding eigenvalue remains negative until $\alpha\leq 2$, so the optimal choice of $z$ for probing PPT across all bipartitions is certainly not the two-colourable bipartition. We have also verified by brute-force that for $N\leq 6$, the state is indeed fully separable at $\alpha=2$.

\subsection{Optimal Distillation} \label{app:distil_global}

Let's consider a two-colourable graph $G$, with the two-colourable bipartition dividing the vertices into sets $V_A$ and $V_B$ of sizes $N_A$ and $N_B$ respectively, $N=N_A+N_B$. We are interested in the distillation of a state
$$
\rho=\frac{\identity+\alpha\proj{\psi}}{2^N+\alpha},
$$
which has diagonal elements $\lambda_x=\bra{\lambda_x}\rho\ket{\lambda_x}=(1+\alpha\delta_{x,0})/(2^N+\alpha)$. For convenience, we will split the string $x$ into $\mu_A\mu_B$ with $\mu_A\in\{0,1\}^{N_A},\mu_B\in\{0,1\}^{N_B}$. The distillation protocol of \cite{dur:03,aschauer:04} gives two protocols P1 and P2 which are defined in terms of their recursion relations for these diagonal elements,
\begin{eqnarray}
P1:&&\lambda_{\mu_A,\mu_B}^{(n+1)}=\sum_{\nu_B}\lambda_{\mu_A,\nu_B}^{(n)}\lambda_{\mu_A,\nu_B\oplus\mu_B}^{(n)}    \nonumber\\
P2:&&\lambda_{\mu_A,\mu_B}^{(n+1)}=\sum_{\nu_A}\lambda_{\nu_A,\mu_B}^{(n)}\lambda_{\nu_A\oplus\mu_A,\mu_B}^{(n)}    \nonumber
\end{eqnarray}
If all the coefficients initially obey
\begin{eqnarray}
\lambda_{0\mu_B}&=\lambda_{0x} &\forall \mu_B\neq 0 \nonumber\\
\lambda_{\mu_A0}&=\lambda_{x0} &\forall \mu_A\neq 0 \nonumber\\
\lambda_{\mu_A\mu_B}&=\lambda_{xx} &\forall \mu_A,\mu_B\neq 0, \nonumber
\end{eqnarray}
as they do for the globally depolarised state, then after the application of $P1$ or $P2$, the new coefficients also have this same structure. One can write down the action of P1 followed by P2.
\begin{widetext}
\begin{eqnarray}
\lambda_{00}^{(n+2)}&=&\left({\lambda_{00}^{(n)}}^2+(2^{N_B}-1){\lambda_{0x}^{(n)}}^2\right)^2+(2^{N_A}-1)\left({\lambda_{x0}^{(n)}}^2+(2^{N_B}-1){\lambda_{xx}^{(n)}}^2\right)^2 \nonumber\\
\lambda_{0x}^{(n+2)}&=&{\lambda_{00}^{(n)}}^2\left(2\lambda_{00}^{(n)}+(2^{N_B}-2)\lambda_{0x}^{(n)}\right)^2+(2^{N_A}-1){\lambda_{xx}^{(n)}}^2\left(2\lambda_{x0}^{(n)}+(2^{N_B}-2)\lambda_{xx}^{(n)}\right)^2 \nonumber\\
\lambda_{0x}^{(n+2)}&=&\left({\lambda_{x0}^{(n)}}^2+(2^{N_B}-1){\lambda_{xx}^{(n)}}^2\right)\left(2{\lambda_{00}^{(n)}}^2+2(2^{N_B}-1){\lambda_{0x}^{(n)}}^2+(2^{N_A}-2)({\lambda_{x0}^{(n)}}^2+(2^{N_B}-1){\lambda_{xx}^{(n)}}^2)\right) \nonumber\\
\lambda_{xx}^{(n+2)}&=&\left(4\lambda_{00}^{(n)}+2(2^{N_B}-2)\lambda_{0x}^{(n)}+(2^{N_A}-2)\left(2\lambda_{x0}^{(n)}+(2^{N_B}-2)\lambda_{xx}^{(n)}\right)\right)\left(2\lambda_{x0}^{(n)}+(2^{N_B}-2)\lambda_{xx}^{(n)}\right) \nonumber
\end{eqnarray}
\end{widetext}
Under the assumption that $\lambda_{00}\geq\lambda_{x0},\lambda_{0x},\lambda_{xx}\geq 0$, and recalling that the coefficients must be normalised such that
\begin{eqnarray}
\lambda_{00}+(2^{N_A}-1)\lambda_{x0}+(2^{N_B}-1)\lambda_{0x}&&	\nonumber\\
+(2^N-2^{N_A}-2^{N_B}+1)\lambda_{xx}&=&1,	\nonumber
\end{eqnarray}
we find that there are two trivial fixed points of the map,
$$
\{\lambda_{00},\lambda_{x0},\lambda_{0x},\lambda_{xx}\}=\left\{\begin{array}{c}\{1,0,0,0\}\\
\{1,1,1,1\}/2^N.	\end{array}\right.
$$
These are stable fixed points i.e.~local attractors.

If we select $N$ even with $N_A=N/2$, as is the case with a linear graph, then a third trivial point can be found with $\lambda_{00}=1/2^{N_A}$ for all possible values of $\lambda_{x0}, \lambda_{0x}$ and $\lambda_{xx}$. This is an unstable fixed point. These are the only valid fixed points subject to the aforementioned constraints, and hence we conclude that repeated alternate applications of P1 and P2 serve to distil the state provided the initial state has $\lambda_{00}>1/2^{N_A}$, i.e.
$$
\frac{1+\alpha}{2^N+\alpha}>\frac{1}{2^{N/2}}
$$ 
or $\alpha>2^{N/2}$, analytically proving the numerical observations made in \cite{Kay:2007}, but also proving optimality of this solution.

\section{Local Depolarising Noise on the Chain} \label{app:local}

Even if the choices of $x$ and $z$ which are optimal for the thermal state are not optimal for the fully depolarising case, the existence of a PPT bipartition is necessary and sufficient for the impossibility of distillation, and, certainly for small sizes, $N\leq 6$, the existence of an NPT bipartition is necessary and sufficient for the persistence of entanglement. Let us now treat a different case, that of local depolarising noise
$$
\mathcal{E}_n(\rho)=\left(1-\frac{3p}{4}\right)\rho+\frac{p}{4}\left(X_n\rho X_n+Y_n\rho Y_n+Z_n\rho Z_n\right).
$$
For a general bipartite graph with adjacency matrix $A$, we have that
$$
s_y=(1-p)^{w_y+w_{A\cdot y \text{ mod }2}-y\cdot (A\cdot y \text{ mod }2)}.
$$
For the special case of a chain of $N=4$ qubits, we can evaluate $f_{x,z}(p)$ by brute force and find the optimal choices of $x=\{1100\}$ and $z=\{1000\}$, corresponding to a critical point described by
$$
1-4(1-p)^3-5(1-p)^4=0,
$$
yielding a maximum value of $p\approx 0.468$ at which the state becomes PPT with respect to all bipartitions. In comparison, our separable decomposition shows that the state is separable if $p\geq1-\sqrt{(2\sqrt{2}-1)/7}\approx 0.489$. It appears that the two do not coincide, and so we should not expect the PPT condition to be as universal for the linear graph state as it was for the GHZ states. The techniques of \cite{doherty} failed to witness any of this entanglement, but this was limited by the finite computational resources available, and is hence inconclusive.

\section{Thermal Graphs} \label{app:twocol}

Ideally, we would like to make similar studies to those of the tree graphs for general thermal graphs, i.e.~to be able to calculate the critical temperature up to which the PPT condition is violated, and to show that beyond that threshold, the state is fully separable. To date, the combinatoric factors have prevented us from giving a proof, but for completeness, we present those considerations which are possible.

Although we do not know the optimal bipartition for non-two-colourable graphs, evaluation of the partial transpose condition of $G$ will always be identical to the evaluation on a bipartite subgraph of $G$. Hence, it suffices to restrict to considering two-colourable graphs.


We have already seen how one can evaluate the critical temperature for GHZ states \cite{Kay:2010} and linear cluster states. We would like to do the same for a 2D graph in spite of any unproven hints from computational complexity. To date, we have been unsuccessful, and have been forced to resort to numerics. The difficulty here is that to evaluate $f(s)$, we need to sum over an exponential number of terms. Some advantage can be gained by rewriting the adjacency matrix of a two-colourable graph in a block structure
$$
A=\left(\begin{array}{cc} 0 & A_{tc}^T \\ A_{tc} & 0 \end{array}\right),
$$
where $A_{tc}$ is an $N_A\times N_B$ matrix, and we will assume that $N_A\leq N_B$. This yields
\begin{equation}
f(s)=\sum_{x\in\{0,1\}^{N_A}}(-s)^{w_x}(1+s)^{w_{x\cdot A_{tc}}}(1-s)^{N_B-w_{x\cdot A_{tc}}},	\label{eqn:app}
\end{equation}
providing a sufficient speed-up to enable some basic numerics on a square lattice of $N=M^2$ qubits. We have plotted the critical temperature in Fig.~\ref{fig:2D}, which suggest a tendency to a finite critical temperature for large $N$, but certainly does not prove it.

A further efficiency saving can be made by observing that the set of strings $x$ that satisfy $x\cdot A_{tc}\text{ mod }2=0$ form a subgroup of the group of binary strings under addition modulo 2. It is only necessary to perform the sum in Eqn.~(\ref{eqn:app}) over the strings $x\in\{0,1\}^{N_A}$ that are in the coset to the subgroup.

\begin{figure}
\begin{center}
\includegraphics[width=0.45\textwidth]{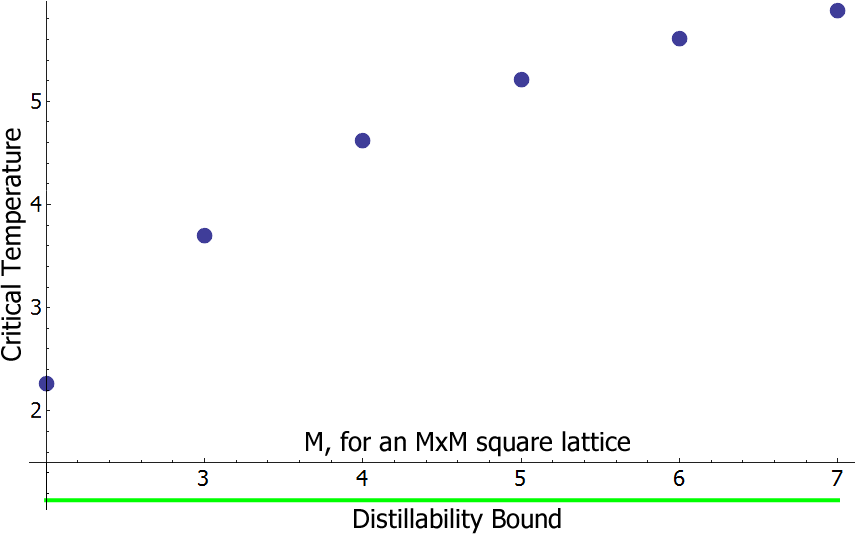}
\end{center}
\vspace{-0.5cm}
\caption{Plot of the critical temperature for an $N=M\times M$ square lattice. Each point is a lower bound on the critical temperature for larger $M$.} \label{fig:2D}
\vspace{-0.5cm}
\end{figure}

\subsection{Full Separability}\label{sec:full_sep}

The other extreme to showing the presence of entanglement is its absence, i.e.~the full separability of the state. We have already seen how for trees this corresponds exactly to the PPT threshold. We report our progress towards a similar proof for two-colourable graphs.

Let us define $h_x=(-1)^{w_x}(-1)^{\sum_{\{i,j\}\in E}x_ix_j}$ such that $f(s)=\sum_yh_ys^{w_y}$. If $y$ is compatible with $x$, then $h_x=h_yh_{x\oplus y}$ because there must be an even number of edges between the two blocks (we will prove this in Lemma \ref{lem:na}).

We will now specify how to make our decomposition. For each $K_x$, find a compatible decomposition (and decompose each term as far as possible). Let's assume that $K_x$ decomposes into $K_y$ and $K_{x\oplus y}$. Let's also further assume that the coefficients $s_y$ and $s_{x\oplus y}$ appear in the $\identity$ term as $h_ys_y$ and $h_{x\oplus y}s_{x\oplus y}$. We want to see that $s_x$ appears as $h_xs_x$. We use the decomposition
$$
s_x(\identity-h_yK_y)(\identity-h_yK_{x\oplus y}).
$$
This means that $s_y\mapsto s_y+h_ys_x$ and $s_{x\oplus y}\mapsto s_{x\oplus y}+h_ys_x$ so that, in the $\identity$ term, they become
\begin{eqnarray}
&h_y(s_y+h_ys_x)+h_{x\oplus y}(s_{x\oplus y}+h_ys_x)-s_x&	\nonumber\\
&=h_ys_y+h_{x\oplus y}s_{x\oplus y}+(h_y^2-1+h_yh_{x\oplus y})s_x.&	\nonumber
\end{eqnarray}
Since $h_y^2=1$ and $h_yh_{x\oplus y}=h_x$, this certainly gives $h_ys_y+h_{x\oplus y}s_{x\oplus y}+h_xs_x$, as desired. This inductively proves that all the terms in front of the $\identity$ term are exactly those that appear in $f(s)$, assuming that the first term of the induction can be proved i.e.~every term $K_x$ that does not have a compatible decomposition appears as $h_xs_x$ in the coefficient in front of the $\identity$ term. However, we know that if there is no compatible decomposition, we are using $s_x(\identity+K_x)$, and therefore, $s_x$ appears in $f(s)$ as $-s_x$. Thus, if our decomposition is to hold, we require that
\begin{lemma}
If $h_x=1$ then $K_x$ has a compatible decomposition. \label{lem:na}
\end{lemma}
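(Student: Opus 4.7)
My plan is to reduce the compatibility question to a linear-algebra condition over $\mathbb{F}_2$ on the induced subgraph $G_x$ of $G$ on $\text{supp}(x)$, and then use the parity hypothesis $h_x = 1$ to exhibit a non-trivial kernel element. First, I would restrict attention to splittings $K_x = K_y K_{x\oplus y}$ in which $\text{supp}(y) \subseteq \text{supp}(x)$, so that $A := \text{supp}(y)$ and $B := \text{supp}(x\oplus y)$ partition $V(G_x)$; at any site $j \notin V(G_x)$ both $K_y$ and $K_{x\oplus y}$ carry only $Z$-type Paulis and are automatically compatible. A site-by-site check of the $(y_j, p^y_j \bmod 2)$ signatures on $V(G_x)$ shows that compatibility reduces to the requirement that every vertex of $A$ has an even number of $G_x$-neighbours in $B$, and symmetrically. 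Packaging this as a single equation gives $\chi_A \in \ker_{\mathbb{F}_2}(M+D)$, where $M$ is the adjacency matrix and $D$ the diagonal-of-degrees matrix of $G_x$. Since $(M+D)\mathbf{1}_{V(G_x)} = 0$ always (corresponding to the trivial split $y=x$), the lemma reduces to proving $\dim_{\mathbb{F}_2}\ker(M+D) \geq 2$ whenever $h_x = 1$.

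If $G_x$ is disconnected, the indicator of any single component provides a second kernel element, so I can assume $G_x$ is connected. In that case $h_x = (-1)^{|V(G_x)|+|E(G_x)|} = 1$ forces the cyclomatic number $r(G_x) = |E(G_x)|-|V(G_x)|+1$ to be odd, so in particular $G_x$ contains a cycle. The matrix-tree theorem applied over $\mathbb{F}_2$ gives $\dim_{\mathbb{F}_2}\ker(M+D) = 1$ if and only if the number of spanning trees $\tau(G_x)$ is odd. Hence the lemma becomes the purely combinatorial statement: every connected bipartite graph with odd cyclomatic number has an even number of spanning trees.

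This last reduction is the main obstacle. For the unicyclic case ($r=1$) there is a direct witness: two-colour the unique cycle and extend the colouring to all of $V(G_x)$ by assigning each pendant subtree the colour of its attaching cycle-vertex; the cut induced by this bipartition consists exactly of the cycle edges, hence is an even subgraph, so the corresponding indicator $\chi_A$ lies in $\ker(M+D)\setminus\langle\mathbf{1}\rangle$ and gives an explicit non-trivial compatible splitting. For general odd $r$ I would attempt either a fixed-point-free involution on spanning trees (swap a designated edge of a fixed even cycle against its partner on the same cycle within each spanning tree, with the oddness of $r$ driving the absence of fixed points) or a mod-$2$ deletion-contraction induction carried out on a parity invariant that survives the bipartiteness-destroying contraction step. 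Given that the paper explicitly concedes that the combinatoric factors have prevented a proof at exactly this point, I anticipate that completing this step will require substantial additional input specific to the bipartite structure, and it is the reason the analogue of the tree separability result has resisted extension to general two-colourable graphs.
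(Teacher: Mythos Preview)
Your reduction to the graph-theoretic condition is essentially the same as the paper's: both arrive at the requirement that the induced subgraph $G_x$ admits a non-trivial vertex bipartition $(A,B)$ in which every vertex sees an even number of neighbours across the cut. Your $\mathbb{F}_2$-Laplacian packaging $\chi_A\in\ker(M+D)$ and the matrix-tree reformulation (the kernel is one-dimensional iff $\tau(G_x)$ is odd) are a pleasant alternative viewpoint, and they do correctly translate the problem into ``connected bipartite with odd cyclomatic number $\Rightarrow$ even number of spanning trees''.

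The genuine gap is that you stop here, treating the final combinatorial step as open. You handle only the unicyclic case and then sketch two possible attacks for general odd $r$, explicitly citing the paper's remark that ``the combinatoric factors have prevented us from giving a proof'' as justification. That is a misreading: the obstruction the paper refers to concerns the \emph{subsequent} step---showing positivity of the coefficients in front of the non-$\identity$ terms of the separable decomposition---not Lemma~\ref{lem:na} itself. The paper does prove the lemma in full: after deriving the same cut condition you derived, it identifies the required partition as an \emph{Eulerian edge cut} and invokes a result of Eppstein which guarantees that every connected bipartite graph with $|V|+|E|$ even admits such a cut. Via your own equivalence this is exactly the spanning-tree-parity statement you were unable to finish, so citing that result (or any equivalent one) would have completed your argument along the same lines as the paper.
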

\begin{proof}
We start by translating the ``compatible basis'' decomposition to graph theoretic terms. Consider $x$ as a bit string with bits corresponding to the vertices of the graph. $y$ is a compatible decomposition provided $y_i=0$ if $x_i=0$ and, for all sites $i$, either (1) $x_i=0$ and $\sum_{\{i,j\}\in E}x_j\text{ mod } 2=0$ or (2) $y_i=0$ and $\sum_{\{i,j\}\in E}y_j\text{ mod } 2=0$ or (3) $(x-y)_i=0$ and $\sum_{\{i,j\}\in E}(x-y)_j\text{ mod } 2=0$. Note that if $x_i=0$, then the only condition that we have is $y_i=0$ -- it does not matter what the neighbours are doing. For that reason, it suffices to consider only induced subgraphs $G'$ of $G$ defined by $x_i=1$ (note that if $G'$ is not connected, there is a trivial decomposition given by the distinct graphs). Hence, since $h_x=1$, $G'$ has the property that the number of edges + number of vertices is even, and $y$ and $x\oplus y$ represent a non-trivial partitioning of the vertices $V'$ of $G'$ into $V_1$ and $V_2$, $V_1\cup V_2=V'$.

At each site $i$, we either have 
$$
y_i=0 \text{ and } \sum_{\{i,j\}\in A}y_j\text{ mod } 2=0
$$
or 
$$
(x-y)_i=0 \text{ and } \sum_{\{i,j\}\in A}(x-y)_j\text{ mod } 2=0,
$$
where $A$ is the adjacency matrix (edge set) of $G'$. These two terms are mutually exclusive, and we can hence write penalty terms to evaluate when these are violated,
\begin{eqnarray}
\sum_i(1-y_i)(1-(Ay \text { mod } 2)_i)&&	\nonumber	\\
+y_i(1-(A(x-y) \text{ mod } 2)_i)&=&x^Tx,	\nonumber
\end{eqnarray}
which can be rewritten as
$$
(x-y)^T(Ay \text{ mod } 2)+y^T(A(x-y) \text{ mod } 2)=0.
$$
Since both terms are non-negative, this means that both must be 0. The interpretation of this equation is that each vertex in either $V_1$ or $V_2$ must have an even number of edges crossing the partition. Thus, the corresponding graph theoretic lemma is
\begin{lemma}
For a connected two-colourable graph $G'$ with (number of edges + number of vertices) even, there are two non-empty vertex sets $V_1$ and $_2$, such that each vertex in $V_1$ has an even number of neighbours in $V_2$ and vice versa.
\label{lem:n}
\end{lemma}
Lemma \ref{lem:n} is exactly the property that is referred to as the existence of an `Eulerian Edge Cut', and its existence has indeed been proven for a two-colourable graph when (number of edges + number of vertices) is even \cite{eppstein}\footnote{We thank Simone Severini for pointing out this reference.}.
\end{proof}

This proves that the coefficient in front of the $\identity$ term in the separable decomposition is $f(s)/2^N$ for a two-colourable graph, and is hence zero at the critical PPT temperature, becoming negative above that temperature (i.e.~not a valid decomposition). This is true for all two-colourable graph states (and thus, graph states of all graphs which are equivalent to two-colourable ones under local unitary rotations). Now, however, the challenge is to prove that the coefficients in front of the non-$\identity$ terms are also positive, which would prove that beyond the PPT threshold, a thermal graph state is fully separable. To date, we have only achieved this feat for the special cases of the GHZ state and the thermal tree graph states. We have also verified this to be true for all even rings of $N\leq 120$ qubits (the coefficients are readily related to those of a chain, for which we gave the solution in the main text). Indeed, if evaluating $f(s)$ for a given $s$ turns out to be a hard computational problem (Sec.~\ref{sec:ising}), one would also expect the same to be true for the difference between $f(s)$ and the coefficients in front of the $K_x$ terms.

\section{Separability of Star Graphs} \label{sec:star}

In \cite{Kay:2010}, we gave an identical form of separable decomposition to that presented here, except specialised to the case of star graphs. This had the massive advantage that all compatible terms are mutually compatible, and this enabled us to give a very broad condition on when the PPT threshold coincided with full separability. Nevertheless, there were still examples of GHZ-diagonal states for which the two did not coincide. It was observed that, at least for special cases, it can be proven that there is entanglement beyond the PPT threshold, but also that the constructed separable decomposition was not optimal. This last step was done by magically coming up with a new separable decomposition. Here we want to illustrate the logical formulation of how we did this so that it can be more widely applied.

Our starting point is a star graph, where vertex 1 is the central node (root) connected to $N-1$ leaves. We know that for $x,z\in\{0,1\}^{N-1}$, the PPT condition is tested by positivity of all the
$$
f_{x_0x,0z}({\vec s})=\sum_{y\in\{0,1\}^{N-1}}(-1)^{x\cdot y}s_{0y}+(-1)^{x_0}(-1)^{(x\oplus z)\cdot y}s_{1y}.
$$
On the other hand, we can give a fully separable decomposition, based on the previously expounded technique of finding a compatible basis, of the form
\begin{eqnarray}
\rho&=&\sum_{y\in\{0,1\}^{N-1}}|s_{1y}|(\identity+\text{sgn}(s_{1y})K_{1y})	\nonumber\\
&&+\left(\sum_{y}s_{0y}K_{0y}-\identity\min_{x\in\{0,1\}^{N-1}}\sum_{y}s_{0y}(-1)^{x\cdot y}\right)	\nonumber\\
&&+\identity\left(\min_{x\in\{0,1\}^{N-1}}\sum_{y}s_{0y}(-1)^{x\cdot y}-\sum_{y}|s_{1y}|\right).	\nonumber
\end{eqnarray}
This state is positive provided
$$
\left(\min_x\sum_{y\in\{0,1\}^{N-1}}s_{0y}(-1)^{x\cdot y}-\sum_{y\in\{0,1\}^{N-1}}|s_{1y}|\right)\geq 0,
$$
and hence this is easily matched with $f_{x_0x,0z}({\vec s})$ over a large range. For $N=3$, the condition that needs to be satisfied is
\begin{equation}
\prod_{y\in\{0,1\}^2}s_{1y}\geq 0.	\label{eqn:3starcond}
\end{equation}
A state such as
$$
\rho=\frac{1}{8(1+\alpha)}\left(\prod_{n=1}^3(\identity+K_n)-2K_1K_3+\alpha\identity\right).
$$
does not satisfy this condition. For $\alpha\geq 2$, $\rho$ is a valid state, but also PPT. On the other hand, the above separable decomposition only functions if $\alpha\geq 4$. Our aim is now to show constructively how to improve this bound up to the threshold of $\alpha=2\sqrt{2}$. The first thing to observe is that the grouping of all the compatible terms is probably as efficient as it's going to get so, as before, we will use
$$
\left(\sum_{y}s_{0y}K_{0y}-\identity\min_{x\in\{0,1\}^{N-1}}\sum_{y}s_{0y}(-1)^{x\cdot y}\right)
$$
which, in this case, is just $(\identity+K_2)(\identity+K_3)$.

Now we will concentrate on the $K_{1y}$ terms, and try to give a better decomposition of them. The first observation that we need to make is that if we had two strings of tensor products of Pauli operators $(\sigma_C\neq\sigma_D)$
$$
s_{ABC}\sigma_A\otimes\sigma_B\otimes\sigma_C+s_{ABD}\sigma_A\otimes\sigma_B\otimes\sigma_D,
$$
then in our previous decomposition, we would have added $(s_{ABC}+s_{ABD})\identity$ to make these positive. However, we can rewrite them as
$$
\sigma_A\otimes\sigma_B\otimes(s_{ABC}\sigma_C+s_{ABD}\sigma_D),
$$
which only requires $\sqrt{s_{ABC}^2+s_{ABD}^2}\identity$ to make it positive. In fact, for a graph state, this situation can never arise since any two products of stabilizers always differ on at least two sites. However, we will now see how to `twist' the Pauli basis such that it can potentially arise. Note, however, that this technique is always going to introduce square roots, which are never present in $f({\vec s})$, so this technique, while having the potential to improve on the previous separable decomposition in some parameter regimes, can never widen the regime where PPT and full separability coincide. The next step is to observe that for a graph where there is a vertex of degree 1, then there are stabilizer terms composed of only 2 Paulis, which means that products differ by only two Paulis. For instance, in the three qubit case we have
$$
\begin{array}{cc}
X\otimes Z\otimes Z & -Y\otimes Z\otimes Y \\
Y\otimes Y\otimes Z & -X\otimes Y\otimes Y
\end{array}
$$
(note that one has to be very careful with any -ve signs that may appear from multiplying out the stabilizers).
We can expand the $X$ and $Y$ on qubit 1 using $\half(X+Y)\pm\half(X-Y)$. The fact that there are products of stabilizers which differ by only two Paulis, one of which is the spin we've just expanded, means that there will now be terms that differ on only one site each,
$$
\begin{array}{cc}
\half(X+Y)\otimes Z\otimes Z & \half(X-Y)\otimes Z\otimes Z \\
-\half(X+Y)\otimes Z\otimes Y & \half(X-Y)\otimes Z\otimes Y \\
\half(X+Y)\otimes Y\otimes Z & -\half(X-Y)\otimes Y\otimes Z \\
-\half(X+Y)\otimes Y\otimes Y & -\half(X-Y)\otimes Y\otimes Y
\end{array}
$$
and we can collect them as above (we could either collect on qubit 2 or qubit 3),
$$
\begin{array}{cc}
\half(X+Y)\otimes Z\otimes (Z-Y) & \half(X-Y)\otimes Z\otimes (Z+Y) \\
\half(X+Y)\otimes Y\otimes (Z-Y) & -\half(X-Y)\otimes Y\otimes (Z+Y)
\end{array}.
$$
Now we see that these terms only differ on one site (this is because there is a vertex which has two neighbours with degree 1), and so can also be combined
$$
\half(X+Y)\otimes (Z+Y)\otimes (Z-Y) + \half(X-Y)\otimes (Z-Y)\otimes (Z+Y).
$$
Each of these terms has eigenvalues $\pm\sqrt{2}$ and hence $\alpha=2\sqrt{2}$ suffices to give a separable decomposition (we also succeeded in witnessing the entanglement between $2\leq\alpha\leq2\sqrt{2}$ using the numerical techniques of \cite{doherty}. Using a modification of \cite{ott1}, it can be proven that the state is entangled up to $\alpha=2\sqrt{2}$ \cite{ott:pers}. This rearrangement of terms to form a new separable decomposition will always give some advantage whenever Eqn.~(\ref{eqn:3starcond}) is not satisfied for the 3-qubit star graph (although the general expression is very messy to write down), and can presumably be applied in other graphs with star-like protrusions if the signs of the weights of $s_y$ happen to be of the correct form.


\begin{thebibliography}{99}
\expandafter\ifx\csname natexlab\endcsname\relax\def\natexlab#1{#1}\fi
\expandafter\ifx\csname bibnamefont\endcsname\relax
  \def\bibnamefont#1{#1}\fi
\expandafter\ifx\csname bibfnamefont\endcsname\relax
  \def\bibfnamefont#1{#1}\fi
\expandafter\ifx\csname citenamefont\endcsname\relax
  \def\citenamefont#1{#1}\fi
\expandafter\ifx\csname url\endcsname\relax
  \def\url#1{\texttt{#1}}\fi
\expandafter\ifx\csname urlprefix\endcsname\relax\def\urlprefix{URL }\fi
\providecommand{\bibinfo}[2]{#2}
\providecommand{\eprint}[2][]{\url{#2}}


 \bibitem{fernando_prl}
\bibinfo{author}{\bibfnamefont{F.}\ \bibnamefont{Pastawski}},
  \bibinfo{author}{\bibfnamefont{A.}\ \bibnamefont{Kay}},
  \bibinfo{author}{\bibfnamefont{N.}\ \bibnamefont{Schuch}}, \bibnamefont{and}
  \bibinfo{author}{\bibfnamefont{I.}\ \bibnamefont{Cirac}},
  \bibinfo{journal}{Phys.\ Rev.\ Lett.}\ \textbf{\bibinfo{volume}{103}},
  \bibinfo{pages}{080501}
  (\bibinfo{year}{2009}).

\bibitem[{\citenamefont{Kay et\ al.}(2006)\citenamefont{Kay, Pachos, D{\"u}r,
  and Briegel}}]{Kay:2006b}
\bibinfo{author}{\bibfnamefont{A.}\ \bibnamefont{Kay}},
  \bibinfo{author}{\bibfnamefont{J.\ K.} \bibnamefont{Pachos}},
  \bibinfo{author}{\bibfnamefont{W.}\ \bibnamefont{D{\"u}r}}, \bibnamefont{and}
  \bibinfo{author}{\bibfnamefont{H.-J.} \bibnamefont{Briegel}},
  \bibinfo{journal}{New J. Phys.} \textbf{\bibinfo{volume}{8}},
  \bibinfo{pages}{147} (\bibinfo{year}{2006}).

\bibitem[{\citenamefont{Kay et\ al.}(2006)\citenamefont{Kay and Pachos}}]{Kay:2007}
\bibinfo{author}{\bibfnamefont{A.}\ \bibnamefont{Kay}} \bibnamefont{and}
  \bibinfo{author}{\bibfnamefont{J.\ K.} \bibnamefont{Pachos}},
  \bibinfo{journal}{Phys.\ Rev.\ A} \textbf{\bibinfo{volume}{75}},
  \bibinfo{pages}{062307} (\bibinfo{year}{2007}).

\bibitem[{\citenamefont{Kaszlikowski et\ al.}(2006)\citenamefont{Kaszlikowski and Kay}}]{Kay:2008}
\bibinfo{author}{\bibfnamefont{D.} \bibnamefont{Kaszlikowski}}
\bibnamefont{and} \bibinfo{author}{\bibfnamefont{A.}\ \bibnamefont{Kay}},
  \bibinfo{journal}{New J.\ Phys.} \textbf{\bibinfo{volume}{10}},
  \bibinfo{pages}{053026} (\bibinfo{year}{2008}).

\bibitem{leandro}
D.\ Cavalcanti, L.\ Aolita, A.\ Ferraro, A.\ Garcia-Saez and A.\ Acin, New J.\ Phys.\ {\bf 12}, 025011 (2010).

\bibitem{vlatko:1}
M.\ Hajdusek and V.\ Vedral, New J.\ Phys.\ 12 053015 (2010).

\bibitem{Kay:2010} A.~Kay, arXiv:1006.5197 (2010).

\bibitem{ott1} O.~G\"uhne and M.~Seevinck, New J.\ Phys.\ {\bf 12}, 053002 (2010).
\bibitem{ott2} O.~G\"uhne and G.~T\'oth, Physics Reports {\bf 474}, 1 (2009).
\bibitem{ott4} G.~T\'oth and O.~G\"uhne, Phys.\ Rev.\ Lett.\ {\bf 94}, 060501 (2005).
\bibitem{ott3} G.~T\'oth and O.~G\"uhne, Phys.\ Rev.\ A {\bf 72}, 022340 (2005).
\bibitem{maarten} M.\ Hein, W.\ D\"ur, J.\ Eisert, R.\ Raussendorf, M.\ V.\ D.\ Nest, and H.\ J.\ Briegel, quant-ph/0602096 (2006).
\bibitem{christian} M.\ Bourennane, M.\ Eibl, C.\ Kurtsiefer, S.\ Gaertner, H.\ Weinfurter, O.\ G\"uhne, P.\ Hyllus, D.\ Bru\ss, M.\ Lewenstein, and A.\ Sanpera, Phys.\ Rev.\ Lett.\ {\bf 92}, 087902 (2004).
\bibitem{lewenstein} M.~Lewenstein, B.~Kraus, J.~I.~Cirac, and P.~Horodecki, Phys.\ Rev.\ A {\bf 62}, 052310 (2000).
\bibitem{wunderlich1} H.\ Wunderlich and M.\ B.\ Plenio, J.\ Mod.\ Opt.\ {\bf 56}, 2100 (2009).
\bibitem{wunderlich2} H.\ Wunderlich, S.\ Virmani, and M.\ B.\ Plenio, arXiv:1003.1681 (2010).


\bibitem[{\citenamefont{D{\"u}r and Cirac}(2000)}]{dur:99}
\bibinfo{author}{\bibfnamefont{W.}\ \bibnamefont{D{\"u}r}} \bibnamefont{and}
  \bibinfo{author}{\bibfnamefont{J.\ I.} \bibnamefont{Cirac}},
  \bibinfo{journal}{Phys. Rev. A} \textbf{\bibinfo{volume}{61}},
  \bibinfo{pages}{042314} (\bibinfo{year}{2000}).
  


\bibitem[{\citenamefont{D{\"u}r et\ al.}(2003)\citenamefont{D{\"u}r, Aschauer,
  and Briegel}}]{dur:03}
\bibinfo{author}{\bibfnamefont{W.}\ \bibnamefont{D{\"u}r}},
  \bibinfo{author}{\bibfnamefont{H.}\ \bibnamefont{Aschauer}}, \bibnamefont{and}
  \bibinfo{author}{\bibfnamefont{H.-J.} \bibnamefont{Briegel}},
  \bibinfo{journal}{Phys. Rev. Lett.} \textbf{\bibinfo{volume}{91}},
  \bibinfo{pages}{107903} (\bibinfo{year}{2003}).

  \bibitem[{\citenamefont{Aschauer et\ al.}(2005)\citenamefont{Aschauer, D{\"u}r,
  and Briegel}}]{aschauer:04}
\bibinfo{author}{\bibfnamefont{H.}\ \bibnamefont{Aschauer}},
  \bibinfo{author}{\bibfnamefont{W.}\ \bibnamefont{D{\"u}r}}, \bibnamefont{and}
  \bibinfo{author}{\bibfnamefont{H.-J.} \bibnamefont{Briegel}},
  \bibinfo{journal}{Phys. Rev. A} \textbf{\bibinfo{volume}{71}},
  \bibinfo{pages}{012319} (\bibinfo{year}{2005}).

\bibitem[{\citenamefont{Schlingemann and Werner}(2001)}]{schlingemann}
\bibinfo{author}{\bibfnamefont{D.}\ \bibnamefont{Schlingemann}}
  \bibnamefont{and} \bibinfo{author}{\bibfnamefont{R.\ F.}
  \bibnamefont{Werner}}, \bibinfo{journal}{Phys. Rev. A}
  \textbf{\bibinfo{volume}{65}}, \bibinfo{pages}{012308}
  (\bibinfo{year}{2001}).

\bibitem[{\citenamefont{Schlingemann}(2002)}]{Schlingemann:01}
\bibinfo{author}{\bibfnamefont{D.}\ \bibnamefont{Schlingemann}},
  \bibinfo{journal}{Quant. Inf. Comp.} \textbf{\bibinfo{volume}{2}},
  \bibinfo{pages}{307} (\bibinfo{year}{2002}).

\bibitem[{\citenamefont{Hein et\ al.}(2004)\citenamefont{Hein, Eisert, and
  Briegel}}]{hein:04}
\bibinfo{author}{\bibfnamefont{M.}\ \bibnamefont{Hein}},
  \bibinfo{author}{\bibfnamefont{J.}\ \bibnamefont{Eisert}}, \bibnamefont{and}
  \bibinfo{author}{\bibfnamefont{H.\ J.} \bibnamefont{Briegel}},
  \bibinfo{journal}{Phys.\ Rev.\ A} \textbf{\bibinfo{volume}{69}},
  \bibinfo{pages}{062311} (\bibinfo{year}{2004}).

\bibitem{graph_text} R. Diestel, {\em Graph Theory}, Graduate Texts in Mathematics, Springer-Verlag, New York, 2000. Published electronically at http://diestel-graph-theory.com/ 


\bibitem{barahona}
F.\ Barahona, J. Phys. A {\bf 15}, 3241 (1982).

\bibitem{simone}
A.\ Cosentino and S.\ Severini, Phys.\ Rev.\ A {\bf 80}, 052309 (2009).

\bibitem{karpinski}
A.\ Ehrenfeucht and M.\ Karpinski, The Computational Complexity of (XOR, AND)-Counting Problems, Technical Report 90-033 (1990).

\bibitem{cardinality}
N.\ Creignou, H.\ Schnoor and I.\ Schnoor, Computer Science Logic {\bf 5213}, 109 (2008), Springer Berlin.

\bibitem{vadhan} S.\ P.\ Vadhan, Siam Journal on Computing {\bf 31}, 398 (1997).

\bibitem{valiant} L.\ G.\ Valiant, in Computing and Combinatorics (2005), p. 1.


\bibitem{landau} I.~Arad and Z.~Landau, SIAM J.\ Comput.\ {\bf 39}, 3089-3121 (2010).

\bibitem{quad_weights} E.\ Knill and R.\ Laflamme, Information Processing Letters {\bf 79}, 173 (2001); D.\ A.\ Lidar, New J.\ Phys.\ {\bf 6}, 167 (2004).




\bibitem[{\citenamefont{Kay}(2007)}]{kay-2006b}
\bibinfo{author}{\bibfnamefont{A.}\ \bibnamefont{Kay}}, \bibinfo{journal}{Phys.\ Rev.\ Lett.} \textbf{\bibinfo{volume}{98}}, \bibinfo{pages}{010501}
  (\bibinfo{year}{2007}).

\bibitem{doherty2} A.\ Doherty and S.\ D.\ Bartlett, Phys.\ Rev.\ Lett.\ {\bf 103}, 020506 (2009); W.\ Son, L.\ Amico, S.\ Pascazio, R.\ Fazio and V.\ Vedral, arXiv:1001.2656 (2010).


\bibitem{doherty} A.\ C.\ Doherty, P.\ A.\ Parrilo and F.\ M.\ Spedalieri, Phys.\ Rev.\ A {\bf 71}, 032333 (2005).

\bibitem{eppstein}
  D.\ Eppstein, Tech. Rep. 96-14, ICS, UCI (1996).




%
%




\bibitem{ott:pers} O.~G\"uhne, {\em personal communication}.

\end{thebibliography}
\end{document}